\newtheorem{theorem}{Theorem}
\newtheorem{lemma}{Lemma}
\newtheorem{example}{Example}
\begin{document}
\title{\bf On cyclic codes over $\mathbb{Z}_q+u\mathbb{Z}_q$}
\author{{\bf  Jian Gao$^{*1}$, ~Fang-Wei Fu$^2$,~Ling Xiao$^1$, Rama Krishna Bandi$^3$}\\
   {\footnotesize \emph{1. School of Science, Shandong University of Technology}}\\
  {\footnotesize  \emph{Zibo, 255091, P. R. China}}\\
 {\footnotesize \emph{2. Chern Institute of Mathematics and LPMC, Nankai University}}\\
  {\footnotesize  \emph{Tianjin, 300071, P. R. China}}\\
  {\footnotesize \emph{3. Department of Mathematics,Indian Institute of Technology Roorkee}}\\
  {\footnotesize  \emph{Roorkee, 247667, India}}\\
 {\footnotesize  \emph{E-mail: jiangao@mail.nankai.edu.cn}}}
\date{}

\maketitle \noindent {\small {\bf Abstract} Let $R=\mathbb{Z}_q+u\mathbb{Z}_q$, where $q=p^s$ and $u^2=0$. In this paper, some structural properties of cyclic codes over the ring $R$ are considered. A necessary and sufficient condition for cyclic codes over the ring $R$ to be free is obtained and a BCH-type bound on the minimum Hamming distance for them is
also given.}
\vskip 1mm

\noindent
 {\small {\bf Keywords} Cyclic codes; Minimum generating sets; free cyclic codes}

\vskip 3mm \noindent {\bf Mathematics Subject Classification (2000) } 11T71 $\cdot$ 94B05 $\cdot$ 94B15

\vskip 3mm \baselineskip 0.2in

\section{Introduction}
The study of linear codes over finite rings has started since the 1960s. Many excellent work on linear codes over finite rings emerged after the significant discovery that some good nonlinear binary codes can be viewed as the binary images of some cyclic codes under the Gray map from $\mathbb{Z}_4$ to $\mathbb{F}_2^2$ (see \cite{Hammons}). Since then, many coding theorists pay their more and more attentions to the codes over finite rings. In these studies, the group rings associated with codes are finite chain rings in general.
\par
Recently, some coding theorists considered linear codes over the finite non-chain ring $\mathbb{F}_p+v\mathbb{F}_p+\cdots+v^{m-1}\mathbb{F}_p$, where $v^m=v$ and $m-1$ is a divisor of $p-1$ (see \cite{Bayram,Gao,Kaya,Zhu1,Zhu2}). In \cite{Zhu2}, Zhu et al. studied the cyclic codes over $\mathbb{F}_2+v\mathbb{F}_2$. In the subsequent paper \cite{Zhu1}, they investigated a class of constacyclic codes over $\mathbb{F}_p+v\mathbb{F}_p$. In \cite{Kaya}, the authors used the theory of cyclic codes over $\mathbb{F}_p+v\mathbb{F}_p$ to discussed some structural properties of quadratic residue (QR) codes over $\mathbb{F}_p+v\mathbb{F}_p$. Some good linear codes and self-dual codes over $\mathbb{F}_p$ are obtained by QR codes and their extended codes respectively. In \cite{Gao}, Gao discussed some results on linear codes and cyclic codes over $\mathbb{F}_p+v\mathbb{F}_p+v^2\mathbb{F}_p$ with $p$ an odd prime. Further, Bayram and Siap gave some basic results on linear codes and constacyclic codes over $\mathbb{F}_p+v\mathbb{F}_p+\cdots+v^{p-1}\mathbb{F}_p$ (see \cite{Bayram}). More recently, Yildiz and Karadeniz \cite{Yildiz} studied the linear codes over the non-principal ring $\mathbb{Z}_4+u\mathbb{Z}_4$, where $u^2=0$. They introduced the MacWilliams identities for the complete, symmetrized and Lee weight enumerators. They also gave three methods to construct formally self-dual codes over $\mathbb{Z}_4+u\mathbb{Z}_4$. Bandi and Bhaintwal studied some structural properties of cyclic codes of odd length over $\mathbb{Z}_4+u\mathbb{Z}_4$, where $u^2=0$ (see \cite{Bandi}). They provided the general form of the generators of a cyclic code over $\mathbb{Z}_4+u\mathbb{Z}_4$, and they also determined a necessary and sufficient condition for cyclic codes of odd length over $\mathbb{Z}_4+u\mathbb{Z}_4$ to be $(\mathbb{Z}_4+u\mathbb{Z}_4)$-free. Let $R=\mathbb{Z}_q+u\mathbb{Z}_q$, where $p$ is a prime, $q=p^s$ and $u^2=0$. It is natural to ask if we can also study some structural properties of cyclic codes over $R$. In this paper, we mainly consider this issue.
\par
The paper is organized as follows. In Section 2, we introduce some basic results on the ring $R$. In Section 3, we study the cyclic codes over $R$. Using the Chinese Remainder Theorem, any cyclic code of odd length over $R$ can be viewed as the direct sum of ideals of some Galois extension rings of $R$, which can deduce the enumerator of cyclic codes. In Section 4, we determine the generator of cyclic codes. We also give a necessary and sufficient condition for cyclic codes over $R$ to be $R$-free.
\section{The ring $\mathbb{Z}_q+u\mathbb{Z}_q$}
Let $R=\mathbb{Z}_q+u\mathbb{Z}_q$, where $p$ is some prime, $q=p^s$ and $u^2=0$. If $q=4$, then $R=\mathbb{Z}_4+u\mathbb{Z}_4$. The ring $R$ is isomorphic to the quotient ring $\mathbb{Z}_q[u]/(u^2)$ and $R=\{ a+bu:~a, b \in \mathbb{Z}_q\}$. Further, The ideals of $R$ are of the following forms:
\par
(i)~$(p^i)$ for $0\leq i \leq s$;
\par
(ii)~$(p^ku)$ for $0\leq k\leq s-1$;
\par
(iii)~$(p^j+\alpha u)$ for $1\leq j \leq s-1$ and $\alpha \in \mathbb{F}_p\backslash \{0\}$;
\par
(iv)~$(p^j, u)$ for $1 \leq j \leq s-1$.\\
Therefore, there are $(s-1)(p-1)+3s$ ideals of $R$. For example, there are $1+3\times 2=7$ ideals of $\mathbb{Z}_4+u\mathbb{Z}_4$.
\par
$R$ is a local ring with the characteristic $q$ and the maximal ideal $(p, u)$. But it is not a chain ring, since neither of the ideals $(p)$ or $(u)$ is included each other. Further, $R$ is not principal since the ideal $(p, u)$ can not be generated by any single element of this ideal.
\par
Define a map
\begin{equation}
\begin{split}
^-:~R& \rightarrow R/(p,u) \\
r=a+bu& \mapsto a~({\rm mod}p)
 \end{split}
 \end{equation}
The map $^-$ is a ring homomorphism and $R/(p,u)$ is denoted by the residue field $\overline{R}$. Since for any $r=a+bu$, $a\in \mathbb{Z}_q$ then $\overline{R}$ is isomorphic to the finite field $\mathbb{F}_p$.
\par
Let $R[x]$ be the polynomial ring over $R$. The map $^-$ can be extended to $R[x]$ to $\overline{R}[x]$ in the usual way. The image of any element $f(x)\in R[x]$ under this map is denoted by $\overline{f}(x)$. Two polynomials $f(x)$ and $g(x)$ are said to be coprime over $R$ if and only if there are two polynomials $a(x)$ and $b(x)$ in $R[x]$ such that
\begin{equation}
a(x)f(x)+b(x)g(x)=1.
\end{equation}
A polynomial $f(x)\in R[x]$ is said to be basic irreducible if $\overline{f}(x)$ is irreducible in $\overline{R}[x]$ and basic primitive if $\overline{f}(x)$ is primitive in $\overline{R}[x]$.
\par
In the following, we consider the factorization of $x^n-1$ over $R$. We assume that ${\rm gcd}(n,q)=1$ throughout this paper.
\begin{lemma}
Let $g(x)$ be a irreducible polynomial over $\mathbb{F}_p$, where $g(x)|(x^{p^r-1}-1)$ for some positive integer $r$. Then there exists a unique basic irreducible polynomial $f(x)\in R[x]$ such that $\overline{f}(x)=g(x)$ and $f(x)|(x^{p^r-1}-1)$ over $R$.
\end{lemma}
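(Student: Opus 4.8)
The plan is to realize $f(x)$ as a Hensel lift of $g(x)$ and to extract both existence and uniqueness from the uniqueness clause of Hensel's lemma. First I would record the ring-theoretic input already available in the excerpt: $R$ is a finite local ring whose maximal ideal $(p,u)$ is nilpotent, since $u^2=0$ and $p^s=0$ force $(p,u)^{s+1}=0$. Consequently $R$ is Artinian and complete in its $(p,u)$-adic topology, hence Henselian, and its residue field is $\overline{R}\cong\mathbb{F}_p$. This is precisely the setting in which monic factorizations lift from $\overline{R}[x]=\mathbb{F}_p[x]$ to $R[x]$.

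Next I would isolate the separability that drives the whole argument. Because $\gcd(p^r-1,p)=1$, the formal derivative of $x^{p^r-1}-1$ is $(p^r-1)x^{p^r-2}$, a unit multiple of $x^{p^r-2}$; since $x^{p^r-1}-1$ has nonzero constant term it is coprime to $x^{p^r-2}$, so $x^{p^r-1}-1$ is squarefree over $\mathbb{F}_p$. Writing $x^{p^r-1}-1=g(x)\,w(x)$ in $\mathbb{F}_p[x]$ with $w(x)=(x^{p^r-1}-1)/g(x)$, squarefreeness gives that $g$ and $w$ are coprime in $\mathbb{F}_p[x]$.

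Existence then follows by applying Hensel's lemma to the monic polynomial $x^{p^r-1}-1\in R[x]$ together with the coprime factorization $g\cdot w$ of its reduction: there exist unique monic $f,W\in R[x]$ with $x^{p^r-1}-1=f(x)W(x)$, $\overline{f}=g$ and $\overline{W}=w$. Since $\overline{f}=g$ is irreducible, $f$ is basic irreducible by definition, and $f\mid(x^{p^r-1}-1)$ by construction, which settles existence.

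For uniqueness I would take any basic irreducible $f'(x)\mid(x^{p^r-1}-1)$ over $R$ with $\overline{f'}=g$. After normalizing $f'$ to be monic (its leading coefficient is a unit, as $f'$ divides a monic polynomial, and this scaling preserves $\overline{f'}=g$), write $x^{p^r-1}-1=f'(x)W'(x)$ with $W'$ monic. Reducing modulo $(p,u)$ gives $g\,w=g\,\overline{W'}$, hence $\overline{W'}=w$, which is coprime to $\overline{f'}=g$. Thus $(f',W')$ is a second monic lift of the coprime pair $(g,w)$, and the uniqueness in Hensel's lemma forces $f'=f$. I expect this uniqueness direction to be the main obstacle, the subtle point being that an \emph{arbitrary} basic irreducible divisor reducing to $g$ must be matched to the Hensel data; the matching works precisely because $g$ and $w$ are coprime, so I would take care to foreground the separability step, since without $\gcd(p^r-1,p)=1$ the lift need not be unique.
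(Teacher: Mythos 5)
Your proof is correct and rests on the same essential tool as the paper's, namely Hensel lifting of the coprime factorization $x^{p^r-1}-1=g(x)w(x)$ over $\mathbb{F}_p$ (coprimality coming from squarefreeness, i.e.\ from $\gcd(p^r-1,p)=1$), but you apply it over a different base ring. The paper takes the unique Hensel lift of $g(x)$ over the chain ring $\mathbb{Z}_q$ (citing Theorem 13.10 of Wan) and then remarks that, since $\mathbb{Z}_q$ is a subring of $R$, the resulting factorization of $x^{p^r-1}-1$ persists in $R[x]$; you instead observe that $R$ itself is local with nilpotent maximal ideal $(p,u)$, hence Henselian, and lift directly in $R[x]$. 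For existence the two routes are interchangeable, but yours buys a genuinely stronger conclusion on uniqueness: the paper's argument only shows the lift is unique among polynomials of $\mathbb{Z}_q[x]$, whereas the lemma asserts uniqueness in $R[x]$, and your step matching an arbitrary basic irreducible divisor $f'$ with $\overline{f'}=g$ against the Hensel data over $R$ is exactly what closes that gap. One point to tighten: the parenthetical claim that the leading coefficient of $f'$ is a unit because $f'$ divides a monic polynomial is not automatic over a ring with nilpotents (the degree of a product can be smaller than the sum of the degrees of the factors), and literal uniqueness in $R[x]$ can in any case only hold up to unit multiples congruent to $1$ modulo $(p,u)$; so the uniqueness statement should be read as ``unique monic'' (which your normalization implicitly assumes), and the reduction of an arbitrary such $f'$ to a monic associate deserves a short separate argument.
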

\begin{proof}
Let $x^{p^r-1}-1=g(x)m(x)$. Since ${\rm gcd}(n,q)=1$, it follows that $g(x)$ has no multiple roots. Clearly, $x\nmid g(x)$. Then $g(x)$ has the unique Hensel lift $f(x)$ over $\mathbb{Z}_q$ such that $f(x)|(x^{p^r-1}-1)$ (see Theorem 13.10 in \cite{Wan}). Since $\mathbb{Z}_q$ is a subring of $R$, it follows that the factorization of $x^{p^r-1}-1$ is still valid over $R$. It means that $f(x)|(x^{p^r-1}-1)$ over $R$. Further, $g(x)$ is irreducible over $\mathbb{F}_p$ deduces $f(x)$ is basic irreducible over $R$.
\end{proof}
We call the polynomial $f(x)$ in Lemma 1 the Hensel lift of $g(x)$ to $R[x]$.
\par
Since ${\rm gcd}(n,q)=1$, it follows that the polynomial $x^n-1$ can be factored uniquely into pairwise coprime basic irreducible polynomials over $R$, i.e.
\begin{equation}
x^n-1=f_1(x)f_2(x)\cdots f_t(x),
\end{equation}
where, for $l=1,2,\ldots, t$, $f_l(x)$ is a basic irreducible polynomial over $R$.
\par
Let $\mathcal {T}=\{0, 1, \xi, \ldots, \xi^{p^m-2}\}$ be the Teichm\"{u}ller set of Galois ring ${\rm GR}(q, m)$, where ${\rm GR}(q, m)$ is the $m$th Galois extension ring of $\mathbb{Z}_q$ and $\xi$ is a basic primitive element of ${\rm GR}(q, m)$. Then for each $a\in {\rm GR}(q, m)$, it can be written as $a=a_0+a_1p+\cdots+a_{s-1}p^{s-1}$ where $a_0, a_1, \ldots, a_{s-1}\in \mathcal {T}$. This is called the $p$-adic representation of the element of ${\rm GR}(q, m)$ (see the Section 3 of Chapter 14 in \cite{Wan}).
\par
Now we consider the Galois extension of $R$. Let $f(x)$ be a basic irreducible polynomial of degree $m$ over $R$. The the $m$th Galois extension $R[x]/(f(x))$ is denoted by ${\rm GR}(R, m)$. Let $\alpha$ be a root of $f(x)$. Then $1, \alpha, \ldots, \alpha^{m-1}$ form a set of $R$-free basis and
\begin{equation}
{\rm GR}(R, m)=\{r_0+r_1\alpha+\cdots+r_{m-1}\alpha^{m-1}:~r_0, r_1, \ldots, r_{m-1}\in R\}.
\end{equation}
The ring ${\rm GR}(R, m)$ is a local ring with maximal ideal $((p, u)+(f(x)))$. Its residue field is isomorphic to $\mathbb{F}_{p^m}$. Moreover,
\begin{equation}
{\rm GR}(R, m)\cong {\rm GR}(q, m)[u]/(u^2)\cong {\rm GR}(q, m)+u{\rm GR}(q, m).
\end{equation}
Therefore, for any element $r=a+bu\in {\rm GR}(q, m)$,
\begin{equation}
r=\sum_{i=0}^{s-1}a_ip^i+u\sum_{i=0}^{s-1}b_ip^i,
\end{equation}
where $a_i, b_i\in \mathcal {T}$ and $a=\sum_{i=0}^{s-1}a_ip^i$, $b=\sum_{i=0}^{s-1}b_ip^i$.
\begin{lemma}
For any $r=\sum_{i=0}^{s-1}a_ip^i+u\sum_{i=0}^{s-1}b_ip^i\in {\rm GR}(R, m)$, $r$ is a unit under multiplication if and only if $a_0\neq 0$.
\end{lemma}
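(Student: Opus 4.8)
The plan is to reduce everything to the local structure of ${\rm GR}(R,m)$ together with the nilpotency of $u$. Since ${\rm GR}(R,m)$ is local with maximal ideal $\mathfrak{m}=(p,u)$, an element is a unit precisely when it lies outside $\mathfrak{m}$, so the entire statement amounts to showing that $r\in(p,u)$ if and only if $a_0=0$. First I would dispose of the easy implication in contrapositive form: if $a_0=0$, then $a=\sum_{i=1}^{s-1}a_ip^i=p\sum_{i=1}^{s-1}a_ip^{i-1}\in(p)$, so $r=a+bu\in(p,u)=\mathfrak{m}$. Because $\mathfrak{m}$ is a proper ideal it contains no unit, and hence $r$ is not a unit; equivalently, if $r$ is a unit then $a_0\neq 0$.

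For the converse I would argue constructively. Suppose $a_0\neq 0$. The key intermediate claim is that $a=\sum_{i=0}^{s-1}a_ip^i$ is already a unit in the Galois ring ${\rm GR}(q,m)$. Granting this, I would factor
\[
r=a+bu=a\bigl(1+a^{-1}bu\bigr),
\]
where $a^{-1}\in{\rm GR}(q,m)\subseteq{\rm GR}(R,m)$. Since $u^2=0$ forces $(a^{-1}bu)^2=0$, the element $1+a^{-1}bu$ is a unit with explicit inverse $1-a^{-1}bu$, so $r$ is a product of two units and therefore a unit, with $r^{-1}=(1-a^{-1}bu)a^{-1}$.

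It remains to justify the intermediate claim that $a$ is a unit in ${\rm GR}(q,m)$ exactly when $a_0\neq 0$. I would derive this from the fact that ${\rm GR}(q,m)$ is itself a local (indeed chain) ring with maximal ideal $(p)$ and Teichm\"{u}ller set $\mathcal{T}$: reducing modulo $p$ sends $a$ to $\overline{a_0}$, and since the nonzero elements of $\mathcal{T}$ are powers of the basic primitive element $\xi$, they are units and so reduce to nonzero elements of the residue field. Thus $a\notin(p)$, i.e. $a$ is a unit, precisely when $a_0\neq 0$.

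The only genuinely substantive step is this Teichm\"{u}ller-coordinate characterization of units in the Galois ring ${\rm GR}(q,m)$, which may be cited from \cite{Wan} or re-derived from the local structure above. Everything else is routine: the nilpotency identity $(1+x)(1-x)=1$ when $x^2=0$, the inclusion ${\rm GR}(q,m)\subseteq{\rm GR}(R,m)$, and the elementary fact that a proper ideal of a ring contains no unit.
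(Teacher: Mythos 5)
Your proof is correct, but it takes a genuinely different route from the paper's. The paper argues through the $q$-th power map: $r$ is a unit if and only if $r^q$ is, and it asserts (with ``one can verify'') that $r^q=a_0^q\in\mathcal{T}$, an identity that rests on $p^s=0$, $u^2=0$ and the $p$-divisibility of the binomial coefficients $\binom{q}{k}$; the lemma then reduces to deciding whether $a_0^q\neq 0$. You instead factor $r=a\bigl(1+a^{-1}bu\bigr)$, import the classical description of the units of ${\rm GR}(q,m)$ (nonzero Teichm\"{u}ller constant term, available in \cite{Wan}), and invert the second factor explicitly via $u^2=0$, while your forward direction needs only that $(p,u)$ is a proper ideal. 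Your version buys an explicit inverse $r^{-1}=(1-a^{-1}bu)a^{-1}$ and makes visible exactly which external facts are consumed; the paper's version is shorter on the page but buries the real work in the unverified identity $r^q=a_0^q$. One remark: you open by reducing the whole statement to ``$r$ is a unit iff $r\notin(p,u)$'' via locality of ${\rm GR}(R,m)$, a fact that is essentially equivalent to the lemma being proved; fortunately your actual argument never needs it (properness of $(p,u)$ for one direction and the explicit inverse for the other suffice), so it would be cleaner to drop the appeal to locality and keep only the two implications you in fact establish.
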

\begin{proof}
One can verify that $r$ is a unit under multiplication of $R$ if and only if $r^k$ is a unit under multiplication of $R$ for any positive integer $k$. Particularly, this is valid for $k=q$. Note that, for any $r\in R$, $r^q=a_0^q\in \mathcal {T}$. Therefore $r$ is a unit if and only if $a_0^q$ is a unit if and only if $a_0^q\neq 0$ if and only if $a_0\neq 0$.
\end{proof}
From Lemma 2, we have that the group of units of ${\rm GR}(R, m)$ denoted by ${\rm GR}(R, m)^*$ is given by
\begin{equation}
{\rm GR}(R, m)^*=\{ \sum_{s=0}^{s-1}a_ip^i+u\sum_{i=0}^{s-1}b_ip^i:~a_i, b_i \in \mathcal {T}, a_0\neq 0\}.
\end{equation}
Let
\begin{equation*}
G_C=\{1, \xi, \ldots, \xi^{p^m-2}\}
\end{equation*}
and
\begin{equation*}
G_A=\{1+\sum_{j=1}^{s-1}a_jp^j+u\sum_{i=0}^{s-1}b_ip^i:~a_j,b_i\in \mathcal {T}\}.
\end{equation*}
\begin{theorem}
${\rm GR}(R, m)^*=G_C\times G_A$, and $|{\rm GR}(R, m)^*|=(p^m-1)\cdot(p^{2s-1})^m$.
\end{theorem}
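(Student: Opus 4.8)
The plan is to realize the two displayed sets as honest subgroups of the abelian group ${\rm GR}(R,m)^*$ and then verify the two conditions for an internal direct product, namely ${\rm GR}(R,m)^* = G_C \cdot G_A$ and $G_C \cap G_A = \{1\}$. Write $\mathcal{M} = (p,u)$ for the maximal ideal of ${\rm GR}(R,m) \cong {\rm GR}(q,m) + u\,{\rm GR}(q,m)$. First I would record two preliminary facts. Since $p^s = 0$ and $u^2 = 0$ in ${\rm GR}(R,m)$, every monomial $p^i u^j$ with $i+j = s+1$ vanishes, so $\mathcal{M}^{s+1} = 0$; hence $\mathcal{M}$ is nilpotent and $G_A = 1 + \mathcal{M}$ is closed under multiplication, with $(1+x)^{-1} = 1 - x + x^2 - \cdots$ a terminating sum for $x \in \mathcal{M}$. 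Thus $G_A$ is a subgroup. Next, using the $p$-adic representation together with the fact that $a_0 \in \mathcal{T}$ reduces to $0$ modulo $\mathcal{M}$ exactly when $a_0 = 0$, I would identify $\mathcal{M}$ with the set of elements whose zeroth digit is $a_0 = 0$, so that $1 + \mathcal{M}$ is precisely the set of elements with $a_0 = 1$, which is the displayed description of $G_A$. For $G_C$, I would invoke the standard Galois-ring fact that the nonzero Teichm\"uller elements form a cyclic group of order $p^m - 1$ generated by $\xi$, and that reduction modulo $\mathcal{M}$ restricts to a group isomorphism $G_C \to \mathbb{F}_{p^m}^*$.

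With these in hand, the surjectivity ${\rm GR}(R,m)^* = G_C G_A$ is short. Given a unit $r$, Lemma 2 gives $a_0 \neq 0$, so $a_0 = \xi^k$ for a unique $k \in \{0, \ldots, p^m - 2\}$. I would then factor $r = \xi^k(\xi^{-k} r)$ and check that $\xi^{-k} r \in G_A$ by reducing modulo $\mathcal{M}$: the image is $\overline{\xi}^{-k}\,\overline{a_0} = \overline{\xi}^{-k}\,\overline{\xi}^{k} = 1$, so $\xi^{-k} r \in 1 + \mathcal{M} = G_A$. For the trivial intersection, an element of $G_C$ is some $\xi^k$; it lies in $G_A = 1 + \mathcal{M}$ iff $\overline{\xi^k} = 1$ in $\mathbb{F}_{p^m}^*$, and since $\overline{\xi}$ has order $p^m - 1$ this forces $k = 0$, i.e. $\xi^k = 1$. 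As ${\rm GR}(R,m)^*$ is abelian (the ring is commutative), $G_C$ and $G_A$ are automatically normal, so these two conditions yield the internal direct product ${\rm GR}(R,m)^* = G_C \times G_A$.

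Finally I would count. Fixing $a_0 = 1$ and letting $a_1, \ldots, a_{s-1}, b_0, \ldots, b_{s-1}$ range freely over $\mathcal{T}$ gives $|G_A| = |\mathcal{T}|^{2s-1} = (p^m)^{2s-1} = (p^{2s-1})^m$, while $|G_C| = p^m - 1$; multiplying gives $|{\rm GR}(R,m)^*| = (p^m-1)(p^{2s-1})^m$. As a consistency check, this agrees with counting units directly from Lemma 2: among the $(p^m)^{2s}$ elements of ${\rm GR}(R,m)$, the units are exactly those with $a_0 \neq 0$, giving $(p^m - 1)(p^m)^{2s-1}$. The main obstacle is not any single computation but marshalling the Galois-ring structure correctly: I must be sure that $\xi$ generates a cyclic group of order $p^m - 1$ of Teichm\"uller units and that reduction modulo $\mathcal{M}$ is a multiplicative bijection on $\mathcal{T}\setminus\{0\}$ --- these are the facts from the $p$-adic/Teichm\"uller theory cited from \cite{Wan} that make both the factorization $r = \xi^k(\xi^{-k}r)$ well defined and the intersection trivial. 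Once those are granted, everything else is bookkeeping with the $p$-adic digits.
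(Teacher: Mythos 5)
Your proof is correct, and it takes a recognizably different route from the paper's. Both arguments pivot on the same decomposition of a unit into its zeroth Teichm\"uller digit $a_0=\xi^k$ and a principal unit, but the mechanics differ: the paper defines the group homomorphism $\Gamma:{\rm GR}(R,m)^*\rightarrow G_C$, $r\mapsto a_0$, identifies $G_A={\rm Ker}\,\Gamma$, and invokes the first isomorphism theorem to obtain ${\rm GR}(R,m)^*/G_A\cong G_C$ and hence the order $|G_C|\cdot |G_A|$. You instead verify the two conditions for an internal direct product: $G_A=1+\mathcal{M}$ is a subgroup because $\mathcal{M}=(p,u)$ is nilpotent ($\mathcal{M}^{s+1}=0$), ${\rm GR}(R,m)^*=G_C\,G_A$ via the factorization $r=\xi^k(\xi^{-k}r)$, and $G_C\cap G_A=\{1\}$ because $\overline{\xi}$ has order $p^m-1$ in the residue field. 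Your route buys something genuine: the paper's proof as written establishes only the quotient isomorphism and the cardinality, which yields a short exact sequence but not, by itself, the asserted splitting ${\rm GR}(R,m)^*=G_C\times G_A$; your factorization $r=\xi^k(\xi^{-k}r)$ supplies exactly the missing splitting (equivalently, the observation that $\Gamma$ restricts to the identity on the subgroup $G_C$). The external inputs you lean on --- that the nonzero Teichm\"uller elements form a cyclic group of order $p^m-1$ mapped isomorphically onto $\mathbb{F}_{p^m}^*$ by reduction modulo $\mathcal{M}$, and that $1+\mathcal{M}$ is precisely the set of elements with zeroth digit $a_0=1$ --- are standard Galois-ring facts from \cite{Wan} that the paper also uses implicitly, and your count $|G_A|=(p^m)^{2s-1}$, $|G_C|=p^m-1$ agrees with the paper's.
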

\begin{proof}
Let $G_C=\{1, \xi, \ldots, \xi^{p^m-2}\}$. Then $G_C$ is a multiplicative cyclic group of order $p^{m}-1$. Let $r=\sum_{i=0}^{s-1}a_ip^i+u\sum_{i=0}^{s-1}b_ip^i$ be a unit of $R$, i.e. $r=\sum_{i=0}^{s-1}a_ip^i+u\sum_{i=0}^{s-1}b_ip^i \in {\rm GR}(R, m)^*$. Define a group homomorphism as follows
\begin{equation*}
\begin{split}
\Gamma:~{\rm GR}(R, m)^* &\rightarrow G_C \\
r=\sum_{i=0}^{s-1}a_ip^i+u\sum_{i=0}^{s-1}b_ip^i & \mapsto a_0.
 \end{split}
\end{equation*}
Clearly, $\Gamma$ is a surjective map. Then, by the first homomorphism theorem, we have
\begin{equation*}
{\rm GR}(R, m)^*/{\rm Ker}\Gamma \cong G_C.
\end{equation*}
Clearly,
\begin{equation*}
{\rm Ker}\Gamma=\{1+\sum_{j=1}^{s-1}a_jp^j+\sum_{i=0}^{s-1}b_ip^i:~a_1,a_2,\ldots,a_{s-1},b_0,b_1,\ldots,b_{s-1}\in \mathcal {T}\}.
\end{equation*}
Denote ${\rm Ker}\Gamma$ by $G_A$. Then ${\rm GR}(R, m)^*/G_A\cong G_C$ and $|{\rm GR}(R, m)^*|=|G_C|\cdot |G_A|=(p^m-1)\cdot(p^{2s-1})^m$
\end{proof}
The set of all zero divisors of ${\rm GR}(R, m)$ is
\begin{equation}
\{\sum_{j=1}^{s-1}a_jp^j+u\sum_{i=0}^{s-1}b_ip^i\},
\end{equation}
which is the maximal ideal of ${\rm GR}(R, m)$.
\begin{lemma}
Let $f(x)$ and $g(x)$ be polynomials of $R[x]$. Then $f(x)$ and $g(x)$ are coprime if and only if $\overline{f}(x)$ and $\overline{g}(x)$ are coprime over $\overline{R}=\mathbb{F}_p$.
\end{lemma}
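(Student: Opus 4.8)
The plan is to prove the two implications separately, with the forward direction being essentially formal and the reverse direction requiring a lifting argument powered by the structure of $R$ as a local ring.

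For the forward direction, suppose $f(x)$ and $g(x)$ are coprime over $R$, so that $a(x)f(x)+b(x)g(x)=1$ for some $a(x),b(x)\in R[x]$. Since the reduction map $^-$ is a ring homomorphism $R[x]\to\overline{R}[x]=\mathbb{F}_p[x]$, I would simply apply it to this identity to obtain $\overline{a}(x)\overline{f}(x)+\overline{b}(x)\overline{g}(x)=1$ in $\mathbb{F}_p[x]$. As $\mathbb{F}_p[x]$ is a principal ideal domain, this B\'ezout relation is exactly the assertion that $\overline{f}(x)$ and $\overline{g}(x)$ are coprime.

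For the reverse direction, suppose $\overline{f}(x)$ and $\overline{g}(x)$ are coprime over $\mathbb{F}_p$, so there are $\overline{a}(x),\overline{b}(x)\in\mathbb{F}_p[x]$ with $\overline{a}(x)\overline{f}(x)+\overline{b}(x)\overline{g}(x)=1$. I would lift $\overline{a}(x)$ and $\overline{b}(x)$ to arbitrary preimages $a_0(x),b_0(x)\in R[x]$ under $^-$. Applying $^-$ to $a_0(x)f(x)+b_0(x)g(x)-1$ yields $0$, so every coefficient of $h(x):=a_0(x)f(x)+b_0(x)g(x)-1$ lies in the kernel $(p,u)$ of $^-$; that is, $a_0(x)f(x)+b_0(x)g(x)=1+h(x)$ with all coefficients of $h(x)$ in the maximal ideal $(p,u)$. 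The key step is then to show that $1+h(x)$ is a unit of $R[x]$. I would first observe that $(p,u)$ is nilpotent: since $p^s=0$ and $u^2=0$ in $R$, any product of $s+1$ of the generators $p,u$ has the form $p^iu^j$ with $i+j=s+1$ and hence vanishes (either $j\geq 2$, so $u^j=0$, or $j\leq 1$, forcing $i\geq s$ and $p^i=0$), giving $(p,u)^{s+1}=0$. Because each coefficient of $h(x)$ lies in $(p,u)$, every coefficient of $h(x)^{s+1}$ lies in $(p,u)^{s+1}=0$, so $h(x)$ is nilpotent in $R[x]$ and $1+h(x)$ is a unit with inverse $v(x)=\sum_{k=0}^{s}(-h(x))^k$, a finite sum. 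Multiplying $a_0(x)f(x)+b_0(x)g(x)=1+h(x)$ through by $v(x)$ then produces $\bigl(v(x)a_0(x)\bigr)f(x)+\bigl(v(x)b_0(x)\bigr)g(x)=1$, exhibiting $f(x)$ and $g(x)$ as coprime over $R$.

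The main obstacle is precisely the unit claim in the reverse direction; the remaining manipulations are formal consequences of the reduction homomorphism. I expect the nilpotency computation $(p,u)^{s+1}=0$ to be the crux, since it is exactly the "local ring with nilpotent maximal ideal" structure of $R$ that allows the B\'ezout relation to be lifted from the residue field $\mathbb{F}_p$ back to $R$.
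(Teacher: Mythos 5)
Your proposal is correct and follows essentially the same route as the paper: both directions hinge on reducing, respectively lifting, a B\'ezout identity, and the reverse direction comes down to inverting $1+pr(x)+ut(x)$ whose non-constant part has all coefficients in the nilpotent maximal ideal $(p,u)$. The only (cosmetic) difference is that the paper writes the inverse as a product of two explicit factors, $\lambda(x)=\sum_{i=0}^{s-1}(-pr(x))^i$ killing the $p$-part and $\tau(x)=1-ut(x)\lambda(x)$ killing the $u$-part, whereas you invert $1+h(x)$ in one step via the truncated geometric series after checking $(p,u)^{s+1}=0$; both computations are valid.
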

\begin{proof}
If $f(x)$ and $g(x)$ are coprime over $R$, then there are polynomials $a(x), b(x)\in R[x]$ such that
\begin{equation*}
a(x)f(x)+b(x)g(x)=1,
\end{equation*}
which implies that
\begin{equation*}
\overline{a}(x)\overline{f}(x)+\overline{b}(x)\overline{g}=1
\end{equation*}
with $\overline{a}(x), \overline{b}(x), \overline{f}(x), \overline{g}(x) \in \mathbb{F}_p[x]$. Therefore $\overline{f}(x)$ and $\overline{g}(x)$ are coprime over $\mathbb{F}_p$.
\par
On the other hand, if $\overline{f}(x)$ and $\overline{g}(x)$ are coprime over $\mathbb{F}_p$, then there are polynomials $\overline{a}(x)$ and  $\overline{b}(x)\in \mathbb{F}_p[x]$ such that
\begin{equation*}
\overline{a}(x)\overline{f}(x)+\overline{b}(x)\overline{g}=1,
\end{equation*}
which implies that
\begin{equation*}
a(x)f(x)+b(x)g(x)=1+pr(x)+ut(x)
\end{equation*}
for some $p(x), t(x)\in R[x]$. Let
\begin{equation*}
\lambda(x)=\sum_{i=0}^{s-1}(-pr(x))^i~{\rm and}~\tau(x)=1-ut(x)\lambda(x).
\end{equation*}
Let $\kappa(x)=\lambda(x)\tau(x)$. Then
\begin{equation*}
\kappa(x)a(x)f(x)+\kappa(x)b(x)g(x)=1,
\end{equation*}
which implies that $f(x)$ and $g(x)$ are coprime over $R$.
\end{proof}
\begin{theorem} \label{theorem 2.2}
Let ${\rm GR}(R, m)=R[x]/(f(x))$ be the $m$th Galois extension of $R$, where $f(x)$ is a basic irreducible polynomial with degree $m$ over $R$. Then the ideals of ${\rm GR}(R, m)$ are precisely
\par
(i)~$(p^i+(f(x)))$ for $0\leq i \leq s$;
\par
(ii)~$(p^ku+(f(x)))$ for $0\leq k\leq s-1$;
\par
(iii)~$(p^j+\alpha u+(f(x)))$ for $1\leq j \leq s-1$, $\alpha \in \mathbb{F}_{p^m}\backslash \{0\}$;
\par
(iv)~$((p^j, u)+(f(x)))$ for $1 \leq j \leq s-1$.
\end{theorem}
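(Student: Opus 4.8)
The plan is to reduce the problem to the classification of ideals of a ring of the form $A+uA$ with $A$ a finite chain ring, and then to read off the four families. By the isomorphism (5), $\mathrm{GR}(R,m)\cong \mathrm{GR}(q,m)+u\,\mathrm{GR}(q,m)$, so I would write $A=\mathrm{GR}(q,m)$ and $S=A+uA$ with $u^2=0$. The ring $A$ is a finite chain ring whose ideals form the single chain $A=(p^0)\supset(p^1)\supset\cdots\supset(p^s)=(0)$, with residue field $A/(p)\cong\mathbb{F}_{p^m}$, and every nonzero element of $A$ is uniquely $p^i v$ with $v$ a unit and $0\le i\le s-1$. The point is that $S=A+uA$ stands in exactly the same relation to the chain ring $A$ as $R=\mathbb{Z}_q+u\mathbb{Z}_q$ does to $\mathbb{Z}_q$; thus the argument that produces the four families of ideals of $R$ (stated at the beginning of this section) should apply verbatim once $\mathbb{F}_p$ is replaced by the larger residue field $\mathbb{F}_{p^m}$, which is precisely why $\alpha$ is to range over $\mathbb{F}_{p^m}\setminus\{0\}$ in family (iii).

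To carry this out intrinsically, I would let $I$ be an ideal of $S$ and attach to it two ideals of $A$: the image $I_1=\{a\in A:\ a+bu\in I\ \text{for some }b\in A\}$ under reduction modulo $u$, and the ``$u$-part'' $I_0=\{b\in A:\ bu\in I\}$. Both are ideals of the chain ring $A$, hence $I_1=(p^{j})$ and $I_0=(p^{k})$ for some $0\le j,k\le s$. Multiplying any $a+bu\in I$ by $u$ gives $au\in I$, so $a\in I_0$; therefore $I_1\subseteq I_0$, i.e. $j\ge k$. Next I would fix $\beta\in A$ with $p^{j}+\beta u\in I$ and prove $I=(p^{j}+\beta u,\ p^{k}u)$: the inclusion $\supseteq$ is immediate, and for $\subseteq$ one writes $a=\lambda p^{j}$ (as $a\in I_1$), subtracts $\lambda(p^{j}+\beta u)\in I$, and uses that the remaining pure-$u$ term lies in $I_0u=(p^{k}u)$.

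Finally I would normalize the ``linking'' coefficient $\beta$. Since $p^{k}u\in I$, the element $\beta$ only matters modulo $(p^{k})$, and the generator $p^{j}+\beta u$ may additionally be rescaled by a unit of $S$, the units being described by Lemma 2 and Theorem 1. Writing $\beta=p^{\ell}w$ with $w$ a unit, or $\beta=0$, and comparing $\ell$ with $k$ and $j$, the cases should collapse: when the $u$-term is absorbable one obtains the uncoupled ideals $(p^{i})$, $(p^{k}u)$, or $(p^{j},u)$ of families (i), (ii), (iv); when it is not, the generator takes the genuinely mixed form $p^{j}+\alpha u$ with $\alpha$ a Teichm\"{u}ller unit, giving family (iii) with $\alpha\in\mathbb{F}_{p^m}\setminus\{0\}$. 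I expect the main obstacle to be exactly this last bookkeeping: organizing the $(j,k,\beta)$ cases so that every ideal lands in one family, and separately checking that the listed ideals are pairwise distinct (for instance via the invariants $(j,k)$ together with the residue class of $\alpha$), so that the enumeration is genuinely a classification rather than a redundant list.
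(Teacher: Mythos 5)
Your reduction to the chain ring $A={\rm GR}(q,m)$ --- attaching to an ideal $I$ of $S=A+uA$ the pair $I_1=(p^j)$ (image mod $u$) and $I_0=(p^k)$ ($u$-part), proving $k\le j$, and presenting $I=(p^j+\beta u,\;p^ku)$ --- is correct, and it is genuinely more substantive than the paper's own argument, which only shows that a non-unit element lies in the maximal ideal $((p,u)+(f(x)))$ and then simply asserts that the ideals inside that maximal ideal are the four listed families. However, the step you postpone as ``bookkeeping'' is exactly where the proof cannot be completed, because the residual invariant is not ``$\beta$ absorbable or $\beta$ a unit'': it is the full $A$-module homomorphism $\varphi\colon(p^j)\to A/(p^k)$ with $\varphi(p^j)=\beta\bmod(p^k)$, and for $s\ge 3$ this $\varphi$ can be nonzero without being representable by a unit $\alpha$, and can also be zero with $1\le k<j$. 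Neither situation lands in families (i)--(iv).

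Concretely, take $s=3$ and $m=1$ (so ${\rm GR}(R,1)\cong R=\mathbb{Z}_{p^3}+u\mathbb{Z}_{p^3}$) and let $I=(p^2,\,pu)$, i.e.\ $(j,k,\beta)=(2,1,0)$ in your notation. Every element of $(p^2+\alpha u)$ has the form $\lambda_0p^2+(\lambda_0\alpha+\lambda_1p^2)u$; equating this to $p^2$ forces $\lambda_0\equiv 1\pmod{p}$, whence $\lambda_0\alpha+\lambda_1p^2$ is a unit for any unit $\alpha$, so $p^2\notin(p^2+\alpha u)$ although $p^2\in I$. One checks just as directly that $I$ contains neither $p$ nor $u$ and is not contained in any $(p^ku)$ or $(p^i)$ with $i\ge 2$, so $I$ equals none of the listed ideals; the same goes for $(p^2+pu)$, where $\beta=p$ is a nonzero non-unit. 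Hence the theorem as stated is false for $s\ge 3$ (it is true for $s\le 2$, e.g.\ $\mathbb{Z}_4+u\mathbb{Z}_4$, where your normalization of $\beta$ does go through, which is presumably how the list was extrapolated). The correct classification is by triples $\bigl((p^j),(p^k),\varphi\bigr)$ with $k\le j$ and $\varphi\in{\rm Hom}_A\bigl((p^j),A/(p^k)\bigr)$, giving $\sum_{0\le k\le j\le s}p^{m\min(s-j,\,k)}$ ideals, which strictly exceeds $(p^m-1)(s-1)+3s$ once $s\ge 3$. So the right way to finish is to carry your own case analysis to the end and accept that it produces a longer list, not to force it into the four advertised families.
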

\begin{proof}
Let $I$ be an ideal of $R$. If $I$ is zero, then $I=(p^s+(f(x)))$ of $R[x]/(f(x))$. In the following, we determine the nonzero ideals of $R[x]/(f(x))$. Let $g(x)\in I$. Since $f(x)$ is basic irreducible over $R$, it follows that $\overline{f}(x)$ is irreducible over $\mathbb{F}_p$. Therefore ${\rm gcd}(\overline{f}(x), \overline{g}(x))=1$ or $\overline{f}(x)$. If ${\rm gcd}(\overline{f}(x), \overline{g}(x))=1$, then ${\rm gcd}(f(x), g(x))=1$ which implies that there are polynomials $a(x), b(x)\in R[x]$ such that $a(x)f(x)+bg(x)=1$. It means that $g(x)$ is a unit of $R[x]/(f(x))$, i.e. $I=R[x]/(f(x))$. If ${\rm gcd}(\overline{f}(x), \overline{g}(x))=\overline{f}(x)$, then there are polynomials $a(x), b(x), c(x)\in R[x]$ such that $g(x)=a(x)f(x)+pb(x)+uc(x)$. Therefore $g(x)\in ((p, u)+(f(x)))$ of $R[x]/(f(x))$. Since the ideals contained in $g(x)\in ((p, u)+(f(x)))$ are as the form in this theorem, the result follows.
\end{proof}
\section{Cyclic codes over $R$}
Let $R^n$ be a free $R$-module of rank $n$, i.e. $R^n=\{(c_0, c_1, \ldots, c_{n-1}):~c_0, c_1, \ldots, c_{n-1}\in R\}$. Let $\mathscr{C}$ be a nonempty set of $R^n$. $\mathscr{C}$ is called a linear code of length $n$ if and only if $\mathscr{C}$ is an $R$-submodule of $R^n$. Let $T$ be the cyclic shift operator. If for any $c=(c_0, c_1, \ldots, c_{n-1})\in \mathscr{C}$ the $T(c)=(c_{n-1}, c_0, \ldots, c_{n-2})$ is also in $\mathscr{C}$, we say $\mathscr{C}$ is a cyclic code of length $n$ over $R$. Define an $R$-module isomorphism as follows
\begin{equation*}
\begin{split}
\Phi:~R^n&\rightarrow R[x]/(x^n-1)\\
(c_0, c_1, \ldots, c_{n-1})&\mapsto c_0+c_1x+\cdots+c_{n-1}x^{n-1}.
 \end{split}
 \end{equation*}
One can verify that $\mathscr{C}$ is a cyclic code of length $n$ over $R$ if and only if $\Phi(\mathscr{C})$ is an ideal of the quotient ring $R[x]/(x^n-1)$. Sometimes, we identity the cyclic code of length $n$ over $R$ with the ideal of $R[x]/(x^n-1)$.
\par
Review that $x^n-1=f_1(x)f_2(x)\cdots f_t(x)$, where, for each $l=1,2,\ldots,t$, $f_l(x)$ is a basic irreducible polynomial over $R$. Denote $\frac{x^n-1}{f_l(x)}$ by $\widehat{f}_l(x)$. Since $f_l(x)$ and $\widehat{f}_l(x)$ are coprime to each other, it follows that there are polynomials $a_l(x), b_l(x)$ in $R[x]$ such that $a_l(x)f_l(x)+b_l(x)\widehat{f}_l(x)=1$. Let $e_l(x)=b_l(x)\widehat{f}_l(x)+(x^n-1)$ and $\mathcal {R}_i=e_l(x)R[x]/(x^n-1)$. Then we have
\begin{equation}
R[x]/(x^n-1)=\mathcal {R}_1\oplus \mathcal {R}_1 \oplus \cdots\oplus \mathcal {R}_t.
\end{equation}
For any $l=1,2,\ldots,t$, the map
\begin{equation} \label{Ring isomorphism}
\begin{split}
\Psi_l:~R[x]/(f_l(x))&\rightarrow \mathcal {R}_l \\
k(x)+(f_l(x))&\mapsto (k(x)+(x^n-1))e_l(x)
 \end{split}
 \end{equation}
 is an isomorphism of rings. Therefore,
\begin{equation}  \label {eq:3.1}
R[x]/(x^n-1)\cong R[x]/(f_1(x))\times R[x]/(f_2(x))\times\cdots\times R[x]/(f_t(x)).
\end{equation}
\begin{lemma} \label{lemma:3.1}
Let $x^n-1=f_1(x)f_2(x)\cdots f_t(x)$ where, for each $l=1,2,\ldots,t$, $f_l(x)$ is a basic irreducible polynomial over $R$. Then under the map $\Psi$, the ideals of $R[x]/(f_l(x))$ are mapped into
\par
(i)~$(p^i\widehat{f}_l(x)+(x^n-1)$ for $0\leq i \leq s$;
\par
(ii)~$(p^ku\widehat{f}_l(x)+(x^n-1))$ for $0\leq k\leq s-1$;
\par
(iii)~$((p^j+\alpha u)\widehat{f}_l(x)+(x^n-1))$ for $1\leq j \leq s-1$ and $\alpha \in \mathbb{F}_p[x]/(\overline{f}_l(x))\backslash \{0\}$;
\par
(iv)~$((p^j, u)\widehat{f}_l(x)+(x^n-1))$ for $1 \leq j \leq s-1$\\
of $\mathcal {R}_l$.
\end{lemma}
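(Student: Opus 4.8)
The plan is to transport the classification of the ideals of ${\rm GR}(R,m)$ obtained above through the ring isomorphism $\Psi_l$. Writing $m_l=\deg f_l$, we have $R[x]/(f_l(x))={\rm GR}(R,m_l)$, so by the classification of the ideals of ${\rm GR}(R,m)$ proved above, its ideals are exactly those generated (modulo $f_l(x)$) by $p^i$, by $p^ku$, by $p^j+\alpha u$, and by the pair $(p^j,u)$, where in the third family the scalar $\alpha$ runs over $\mathbb{F}_{p^{m_l}}\setminus\{0\}$. Identifying $\mathbb{F}_{p^{m_l}}\cong \mathbb{F}_p[x]/(\overline{f}_l(x))$ (via $x\mapsto$ a root of $\overline{f}_l$) matches the index set used in case (iii) of the statement. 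Since $\Psi_l$ is an isomorphism of rings, it induces an inclusion-preserving bijection on ideals, carrying the ideal generated by $g+(f_l(x))$ to the ideal of $\mathcal{R}_l$ generated by $\Psi_l(g+(f_l(x)))$. Thus it suffices to compute $\Psi_l$ on each of the four types of generator and to recognize the resulting ideal.

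By the definition of $\Psi_l$, for any representative $g$ we have $\Psi_l(g+(f_l(x)))=(g+(x^n-1))e_l(x)=g\,b_l(x)\widehat{f}_l(x)\pmod{x^n-1}$, using $e_l(x)=b_l(x)\widehat{f}_l(x)$ in $R[x]/(x^n-1)$. The heart of the argument is therefore to show that the spurious factor $b_l(x)$ may be absorbed, i.e. that in $\mathcal{R}_l$ the ideal generated by $g\,b_l(x)\widehat{f}_l(x)$ coincides with the ideal generated by $g\,\widehat{f}_l(x)$. This is the one delicate point; once it is settled the rest is bookkeeping.

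To settle it, multiply the identity $a_l(x)f_l(x)+b_l(x)\widehat{f}_l(x)=1$ by $\widehat{f}_l(x)$ and reduce modulo $x^n-1$; since $f_l(x)\widehat{f}_l(x)=x^n-1\equiv 0$, this yields $b_l(x)\widehat{f}_l(x)^{2}\equiv \widehat{f}_l(x)$, that is $\widehat{f}_l(x)e_l(x)=\widehat{f}_l(x)$. In particular $\widehat{f}_l(x)\in\mathcal{R}_l$, and because $e_l(x)$ is the multiplicative identity of $\mathcal{R}_l$ while $\widehat{f}_l(x)\,b_l(x)=e_l(x)$, the element $\widehat{f}_l(x)$ is a unit of $\mathcal{R}_l$. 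Now $g\,e_l(x)=g\,b_l(x)\widehat{f}_l(x)=(b_l(x)e_l(x))(g\,\widehat{f}_l(x))$ lies in the ideal of $\mathcal{R}_l$ generated by $g\,\widehat{f}_l(x)$, while conversely $(g\,e_l(x))\widehat{f}_l(x)=g\,b_l(x)\widehat{f}_l(x)^{2}=g\,\widehat{f}_l(x)$ shows that $g\,\widehat{f}_l(x)$ lies in the ideal generated by $g\,e_l(x)$. Hence the two ideals are equal.

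Applying this equality with $g$ taken successively to be $p^i$, $p^ku$, $p^j+\alpha u$ (where $\alpha$ is a lift of the chosen element of $\mathbb{F}_p[x]/(\overline{f}_l(x))$), and with the two generators $p^j,u$ in case (iv), produces precisely the four families displayed in the statement, each now written in the form $g\,\widehat{f}_l(x)+(x^n-1)$. Finally, since $\Psi_l$ is a bijection on ideals, these images exhaust the ideals of $\mathcal{R}_l$, which completes the argument.
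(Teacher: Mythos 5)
Your proposal is correct and follows essentially the same route as the paper: both transport the ideal classification of $R[x]/(f_l(x))={\rm GR}(R,m_l)$ through $\Psi_l$ and then absorb the factor $b_l(x)$ by multiplying the B\'ezout identity $a_l(x)f_l(x)+b_l(x)\widehat{f}_l(x)=1$ by $\widehat{f}_l(x)$ to get $b_l(x)\widehat{f}_l(x)^2\equiv\widehat{f}_l(x)\pmod{x^n-1}$, giving $(g\,e_l(x))=(g\,\widehat{f}_l(x))$. You simply carry out for a general generator $g$ (and observe that $\widehat{f}_l(x)$ is a unit of $\mathcal{R}_l$) what the paper does for $g=1$ and then dismisses with ``the remainder cases can also be verified in the same way.''
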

\begin{proof}
Under the ring isomorphism $\Psi$, we have
\begin{equation*}
1+(f_l(x))\mapsto (1+(x^n-1))e_l.
\end{equation*}
Since $e_l=b_l(x)\widehat{f}_l(x)+(x^n-1)$, it follows that
\begin{equation*}
1+(f_l(x))\mapsto b_l(x)\widehat{f}_l(x)+(x^n-1).
\end{equation*}
Clearly, $b_l(x)\widehat{f}_l(x)+(x^n-1)\in (\widehat{f}_l(x)+(x^n-1))$.
\par
Multiplying both sides of $a_l(x)f_l(x)+b_l(x)\widehat{f}_l(x)=1$ by $\widehat{f}_l(x)$, we obtain
\begin{equation*}
b_l(x)\widehat{f}_l(x)\widehat{f}_l(x)+a_l(x)(x^n-1)=\widehat{f}_l(x).
\end{equation*}
Then
\begin{equation*}
b_l(x)\widehat{f}_l(x)\widehat{f}_l(x)+(x^n-1)=\widehat{f}_l(x)+(x^n-1),
\end{equation*}
which implies that $\widehat{f}_l(x)+(x^n-1)\in (b_l(x)\widehat{f}_l(x)+(x^n-1))$. Therefore, $(b_l(x)\widehat{f}_l(x)+(x^n-1))=(\widehat{f}_l(x)+(x^n-1))$ and the image of $(1+(f_i(x)))$ under the ring isomorphism $\Psi$ is $(\widehat{f}_l(x)+(x^n-1))$. The remainder cases can also be verified in the same way.
\end{proof}
By Theorem \ref{theorem 2.2}, Eq. (\ref{eq:3.1}) and Lemma \ref{lemma:3.1}, we have the following result directly.
\begin{theorem} \label{theorem: 3.1}
Let $x^n-1=f_1(x)f_2(x)\cdots f_t(x)$ where, for each $l=1,2,\ldots,t$, $f_l(x)$ is a basic irreducible polynomial with degree $\varepsilon_l$ over $R$. Then there are $\prod_{l=1}^t((p^{\varepsilon_l}-1)(s-1)+3s)$ cyclic codes of length $n$ over $R$. Further, any cyclic code is the sum of the ideals of $\mathcal {R}_l$.
\end{theorem}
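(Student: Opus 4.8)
The plan is to reduce everything to the ring decomposition already established and then count. Recall that, under the isomorphism $\Phi$, a cyclic code of length $n$ over $R$ is exactly an ideal of $R[x]/(x^n-1)$; so it suffices to count the ideals of this quotient ring and to describe their structure. By Eq.~(\ref{eq:3.1}), the Chinese Remainder Theorem gives a ring isomorphism
\[
R[x]/(x^n-1) \;\cong\; \prod_{l=1}^{t} R[x]/(f_l(x)),
\]
realized concretely by the orthogonal idempotents $e_l$ through the maps $\Psi_l$ of (\ref{Ring isomorphism}).

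First I would record the standard fact that in a finite direct product of commutative unital rings every ideal splits as a direct product of ideals of the factors. Concretely, if $\mathscr{C}$ is an ideal of $R[x]/(x^n-1)$, then, since $1=\sum_l e_l$ with the $e_l$ pairwise orthogonal idempotents, one has $\mathscr{C}=\bigoplus_{l=1}^{t}\mathscr{C}e_l$, and each $\mathscr{C}e_l$ is an ideal of $\mathcal{R}_l=e_l\,(R[x]/(x^n-1))$. Conversely any choice of an ideal $I_l$ in each $\mathcal{R}_l$ yields the ideal $\bigoplus_l I_l$, and these two operations are mutually inverse. This already proves the second assertion: every cyclic code is the direct sum $\bigoplus_l(\mathscr{C}e_l)$ of ideals of the rings $\mathcal{R}_l$, and transporting $\mathscr{C}e_l$ back through $\Psi_l^{-1}$ identifies it with one of the ideals of $R[x]/(f_l(x))$ listed in Lemma~\ref{lemma:3.1}.

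For the enumeration, the bijection above shows that the number of ideals of $R[x]/(x^n-1)$ equals $\prod_{l=1}^t N_l$, where $N_l$ is the number of ideals of $R[x]/(f_l(x))={\rm GR}(R,\varepsilon_l)$. Here I would invoke Theorem~\ref{theorem 2.2}: its four families of ideals are exhaustive, so I simply add up their cardinalities. Family (i) contributes $s+1$ ideals (indices $0\le i\le s$), family (ii) contributes $s$ (indices $0\le k\le s-1$), family (iii) contributes $(s-1)(p^{\varepsilon_l}-1)$ since $\alpha$ ranges over $\mathbb{F}_{p^{\varepsilon_l}}\setminus\{0\}$ while $1\le j\le s-1$, and family (iv) contributes $s-1$. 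Summing,
\[
N_l=(s+1)+s+(s-1)(p^{\varepsilon_l}-1)+(s-1)=(p^{\varepsilon_l}-1)(s-1)+3s.
\]
Taking the product over $l$ yields the claimed count $\prod_{l=1}^t\big((p^{\varepsilon_l}-1)(s-1)+3s\big)$.

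The argument is essentially bookkeeping once the CRT splitting is in hand, so there is no serious obstacle; the one point that deserves care is the claim that the four families in Theorem~\ref{theorem 2.2} are genuinely distinct, so that no ideal is counted twice. I expect this to be the main thing to verify, e.g. checking that an ideal of type (iii) with a given pair $(j,\alpha)$ cannot coincide with one of type (i), (ii) or (iv), and that distinct pairs $(j,\alpha)$ give distinct ideals, which one settles by comparing the images of the ideals in the residue field together with their sizes and annihilators.
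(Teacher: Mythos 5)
Your proposal is correct and follows exactly the route the paper intends: the paper derives this theorem ``directly'' from Theorem~\ref{theorem 2.2}, the CRT decomposition~(\ref{eq:3.1}), and Lemma~\ref{lemma:3.1}, and your argument simply spells out the idempotent splitting and the count $(s+1)+s+(s-1)(p^{\varepsilon_l}-1)+(s-1)=(p^{\varepsilon_l}-1)(s-1)+3s$ that the paper leaves implicit. Your closing remark about checking that the four families of ideals are pairwise distinct is a fair point of care that the paper itself does not address.
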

\begin{example}
Consider a cyclic code of length $3$ over $\mathbb{Z}_4+u\mathbb{Z}_4$. Since $x^3-1=(x-1)(x^2+x+1)$, it follows that there are $(1+3\times 2)(3+3\times 2)=7\times 9=63$ cyclic codes of length $3$ over $\mathbb{Z}_4+u\mathbb{Z}_4$. Let $f_1=x-1$ and $f_2=x^2+x+1$. In the following Table 1, we list all cyclic codes of length $3$ over $\mathbb{Z}_4+u\mathbb{Z}_4$.
\begin{table}[ht]
\caption{\textbf{All cyclic codes of length $3$ over $\mathbb{Z}_4+u\mathbb{Z}_4$}}
\begin{center}
\begin{small}
\begin{tabular}{ccccc}
\hline
$0$ & $(f_2)$ & $(2f_2)$ & $(uf_2)$\\
 $(2uf_2)$ &$((2+u)f_2)$ & $(2f_2, uf_2)$ & $(f_1)$\\
$(1)$ & $(f_1, 2f_2)$ & $(f_1, uf_2)$ & $(f_1, 2uf_2)$ \\
$(f_1, (2+u)f_2)$ & $(f_1, 2f_2, uf_2)$ & $(2f_1)$ & $(2f_1, f_2)$\\
$(2)$ & $(2f_1, uf_2)$ & $(2f_1, 2uf_2)$ & $(2f_1, (2+u)f_2)$ \\
$(2, uf_2)$ & $(uf_1)$ & $(uf_1, f_2)$ & $(uf_1, 2f_2)$ \\
$(u)$ & $(uf_1, 2uf_2)$ & $(uf_1, (2+u)f_2)$ & $(u, 2f_2)$ \\
$(2uf_1)$ & $(2uf_1, f_2)$ & $(2uf_1, 2f_2)$ & $(2uf_1, uf_2)$ \\
$(2u)$ & $(2uf_1, (2+u)f_2)$ & $(2uf_1, 2f_2, uf_2)$ & $((2+u)f_1)$ \\
$((2+u)f_1, f_2)$ & $((2+u)f_1, 2f_2)$ & $((2+u)f_1, uf_2)$ & $((2+u)f_1, 2uf_2)$ \\
$(2+u)$ & $((2+u)f_1, 2f_2, uf_2)$ & $((2+xu)f_1)$ & $((2+xu)f_1, f_2)$ \\
$((2+xu)f_1, 2f_2)$ &$((2+xu)f_1, uf_2)$ & $((2+xu)f_1, 2uf_2)$ & $((2+xu)f_1, (2+u)f_2)$ \\
$((2+xu)f_1, 2f_2, uf_2)$ & $((2+(1+x)u)f_1)$ & $((2+(1+x)u)f_1, f_2)$ & $((2+(1+x)u)f_1, 2f_2)$ \\
$((2+(1+x)u)f_1, uf_2)$ & $((2+(1+x)u)f_1, 2uf_2)$ & $((2+(1+x)u)f_1, (2+u)f_2)$ & $((2+(1+x)u)f_1, 2f_2, uf_2)$ \\
$(2f_1, uf_1)$ & $(2f_1, uf_1, f_2)$ & $(2, uf_1)$ & $(2f_1, u)$\\
$(2f_1,uf_1,  2uf_2)$ &$(2f_1, uf_1, (2+u)f_2)$ & $(2, u)$ & \\

\hline
\end{tabular}
\end{small}
\end{center}
\end{table}
\end{example}
\section{Generators of cyclic codes}
In fact, the generator of the cyclic code $\mathscr{C}$ can also be characterised as follows.
\begin{theorem} \label {theorem: 3.2}
Let $\mathscr{C}$ be a cyclic code of length $n$ over $R$. Then $\mathscr{C}=(f_0(x)+uf_1(x), ug_1(x))$ with $f_0(x), f_1(x), g_1(x)\in \mathbb{Z}_q[x]$ and $g_1(x)|f_0(x)$.
\end{theorem}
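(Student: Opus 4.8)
The plan is to peel off the $u$-part of $\mathscr{C}$ and reduce everything to the known theory of cyclic codes over the chain ring $\mathbb{Z}_q$. Let $\pi\colon R[x]/(x^n-1)\to \mathbb{Z}_q[x]/(x^n-1)$ be the reduction-modulo-$u$ map $a(x)+ub(x)\mapsto a(x)$; it is a surjective ring homomorphism. Attach to $\mathscr{C}$ two subsets of $\mathbb{Z}_q[x]/(x^n-1)$: the residue code $\mathscr{C}_0=\pi(\mathscr{C})$ and the torsion code $\mathscr{C}_1=\{\,b(x):ub(x)\in\mathscr{C}\,\}$. Both are ideals, hence cyclic codes over $\mathbb{Z}_q$: $\mathscr{C}_0$ because it is the image of an ideal under a surjective homomorphism, $\mathscr{C}_1$ because $ub\in\mathscr{C}$ and any multiplier $r$ give $u(rb)=r(ub)\in\mathscr{C}$. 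First I would record that, since $\gcd(n,q)=1$, the Chinese Remainder decomposition used for $R$ in Eq.~(\ref{eq:3.1}) is equally valid over $\mathbb{Z}_q$ and writes $\mathbb{Z}_q[x]/(x^n-1)$ as a finite product of Galois rings $\mathrm{GR}(q,\varepsilon_l)$. Each factor is a finite chain ring, so principal, and a finite product of principal ideal rings is again one; hence $\mathbb{Z}_q[x]/(x^n-1)$ is a principal ideal ring and we may write $\mathscr{C}_0=(f_0(x))$ and $\mathscr{C}_1=(g_1(x))$ with $f_0,g_1\in\mathbb{Z}_q[x]$.

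The divisibility $g_1\mid f_0$ then comes for free from the inclusion $\mathscr{C}_0\subseteq\mathscr{C}_1$. Indeed, if $a(x)\in\mathscr{C}_0$ then $a+ub\in\mathscr{C}$ for some $b$, and multiplying by $u$ and using $u^2=0$ gives $ua\in\mathscr{C}$, i.e.\ $a\in\mathscr{C}_1$. Thus $(f_0)\subseteq(g_1)$, so $f_0\in(g_1)$ and $g_1\mid f_0$.

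Next I would exhibit the two generators. Because $f_0\in\mathscr{C}_0=\pi(\mathscr{C})$, there is $f_1\in\mathbb{Z}_q[x]$ with $f_0+uf_1\in\mathscr{C}$, and because $g_1\in\mathscr{C}_1$ we have $ug_1\in\mathscr{C}$; this already yields $(f_0+uf_1,\,ug_1)\subseteq\mathscr{C}$. For the reverse inclusion take an arbitrary $c=a+ub\in\mathscr{C}$. Then $a=\pi(c)\in(f_0)$, say $a=\lambda f_0$, and $c-\lambda(f_0+uf_1)=u(b-\lambda f_1)\in\mathscr{C}$ forces $b-\lambda f_1\in\mathscr{C}_1=(g_1)$, say $b-\lambda f_1=\eta g_1$. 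Hence $c=\lambda(f_0+uf_1)+\eta(ug_1)\in(f_0+uf_1,\,ug_1)$, and combining the two inclusions gives $\mathscr{C}=(f_0+uf_1,\,ug_1)$.

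The point I expect to need the most care is reconciling the ring-theoretic relation $f_0\in(g_1)$ inside $\mathbb{Z}_q[x]/(x^n-1)$ with the literal polynomial divisibility $g_1(x)\mid f_0(x)$ asserted in the statement. I would handle this by choosing $f_0$ and $g_1$ to be the canonical generators coming from the chain-ring factors (each of the shape $p^{a}$ times a divisor of $x^n-1$ on the appropriate components), for which membership in the quotient ideal is equivalent to honest divisibility of the chosen representatives; the inclusion $\mathscr{C}_0\subseteq\mathscr{C}_1$ forces the exponent and divisor data of $g_1$ to be dominated by those of $f_0$ componentwise. Verifying that these canonical representatives can be chosen coherently, together with confirming that $\pi$ is a homomorphism and that the principal-ideal-ring reduction really produces a single generator in each component, are the routine but essential checks that turn the sketch above into a complete argument.
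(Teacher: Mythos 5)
Your proof is correct and takes essentially the same route as the paper's: your reduction map $\pi$ is the paper's $\psi$, your residue and torsion codes are exactly its $\psi(\mathscr{C})$ and $J$, and the divisibility $g_1\mid f_0$ comes from the same observation that $u(f_0+uf_1)=uf_0\in\mathscr{C}$. You in fact supply two details the paper leaves implicit, namely the justification that $\mathbb{Z}_q[x]/(x^n-1)$ is a principal ideal ring and the explicit reverse inclusion $\mathscr{C}\subseteq(f_0(x)+uf_1(x),\,ug_1(x))$.
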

\begin{proof}
Define a surjective homomorphism from $R$ to $\mathbb{Z}_q$ as $\psi(a+bu)=a$ for any $a+bu\in R$. Extend $\psi$ to the polynomial ring $R[x]/(x^n-1)$ as $\psi(a_0+a_1+\cdots+a_{n-1}x^{n-1})=\psi(a_0)+\psi(a_1)x+\cdots+\psi(a_{n-1})x^{n-1}$ for any polynomial $a_0+a_1x+\cdots+a_{n-1}\in R[x]/(x^n-1)$. Let $\mathscr{C}$ be a cyclic code of length $n$ over $R$, and restrict $\psi$ to $\mathscr{C}$. Define a set
\begin{equation*}
J=\{ f(x)\in \mathbb{Z}_q[x]/(x^n-1):~uf(x)\in {\rm Ker}\psi\}.
\end{equation*}
Clearly, $J$ is an ideal of $\mathbb{Z}_q[x]/(x^n-1)$, which implies that there is a polynomial $g_1(x)\in \mathbb{Z}_q[x]$ such that $J=(g_1(x))$. It means that ${\rm Ker}\psi=(ug_1(x))$. Further, the image of $\mathscr{C}$ under the map $\psi$ is also an ideal of $\mathbb{Z}_q[x]/(x^n-1)$. Then there is a polynomial $f_0(x)\in \mathbb{Z}_q[x]$ such that $\psi(\mathscr{C})=(f_0(x))$. Hence $\mathscr{C}=(f_0(x)+uf_1(x), ug_1(x))$. Clearly, $u(f_0(x)+uf_1(x))=uf_0(x)\in {\rm Ker}\psi$ implying $g_1(x)|f_0(x)$.
\end{proof}
\begin{lemma}\label{lemma 5}
Let $\mathscr{C}=(f_0(x)+uf_1(x), ug_1(x))$ be a cyclic code of length $n$ over $R$. If $g_1(x)$ is a monic polynomial over $R$, then we can assume that
\begin{equation*}
{\rm deg}(f_1(x))< {\rm deg}(g_1(x)).
\end{equation*}
\end{lemma}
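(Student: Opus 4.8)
The plan is to reduce the degree of $f_1(x)$ by dividing it by the monic polynomial $g_1(x)$ and absorbing the quotient into the generator $ug_1(x)$, which costs nothing because $ug_1(x)$ is already a codeword of $\mathscr{C}$. Since $g_1(x)$ is monic, its leading coefficient is the unit $1$, so the ordinary division algorithm is available in $\mathbb{Z}_q[x]$ even though $\mathbb{Z}_q$ is not a field. First I would write
\begin{equation*}
f_1(x)=q(x)g_1(x)+r(x),\qquad r(x)=0\ \text{or}\ \deg r(x)<\deg g_1(x),
\end{equation*}
with $q(x),r(x)\in\mathbb{Z}_q[x]$.

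Next I would rewrite the first generator as
\begin{equation*}
f_0(x)+uf_1(x)=\bigl(f_0(x)+ur(x)\bigr)+u\,q(x)g_1(x).
\end{equation*}
The term $u\,q(x)g_1(x)=q(x)\cdot\bigl(ug_1(x)\bigr)$ is an $R[x]/(x^n-1)$-multiple of $ug_1(x)$, hence lies in $\mathscr{C}$. Consequently $f_0(x)+ur(x)=\bigl(f_0(x)+uf_1(x)\bigr)-u\,q(x)g_1(x)$ is an $R$-combination of the two original generators, and conversely $f_0(x)+uf_1(x)$ is an $R$-combination of $f_0(x)+ur(x)$ and $ug_1(x)$. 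Therefore
\begin{equation*}
\mathscr{C}=\bigl(f_0(x)+uf_1(x),\,ug_1(x)\bigr)=\bigl(f_0(x)+ur(x),\,ug_1(x)\bigr),
\end{equation*}
and replacing $f_1(x)$ by $r(x)$ furnishes a generating pair of the required shape, with $\deg r(x)<\deg g_1(x)$.

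The argument is short, and the only points needing care are minor. First, one must ensure the division step is legitimate over the non-field ring $\mathbb{Z}_q$; this is exactly where the hypothesis that $g_1(x)$ is monic enters, since division with remainder by a monic divisor is valid over any commutative ring. Second, one should check that passing to the quotient by $x^n-1$ does not disturb the reduction; it does not, because every manipulation above takes place inside the ideal $\mathscr{C}$ and the replacement modifies a generator only by a codeword. I expect no genuine obstacle here: the lemma is essentially a normalization statement that legitimizes choosing the reduced representative $r(x)$ in place of $f_1(x)$.
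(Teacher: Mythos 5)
Your proposal is correct and follows essentially the same route as the paper: divide $f_1(x)$ by the monic $g_1(x)$, absorb the term $u\,q(x)g_1(x)$ into the generator $ug_1(x)$, and verify the two inclusions showing the ideal is unchanged. The only cosmetic difference is that the paper phrases the reverse inclusion by adding $u(q-1)g_1(x)m(x)$ (i.e.\ subtracting $ug_1(x)m(x)$, since $qu=0$ in $R$) rather than subtracting outright, which is the same computation.
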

\begin{proof}
If ${\rm deg}(g_1(x))\leq {\rm deg}(f_1(x))$, then there are polynomials $m(x), r(x)\in R[x]$ such that
\begin{equation*}
f_1(x)=g_1(x)m(x)+r(x)
\end{equation*}
with ${\rm deg}(r(x))< {\rm deg}(g_1(x))$ or $r(x)=0$. Then $\mathscr{C}=(f_0(x)+ug_1(x)m(x)+ur(x), ug_1(x))$. Let $\mathscr{C}_1=(f_0(x)+ur(x), ug_1(x))$. Clearly, $\mathscr{C}\subseteq \mathscr{C}_1$. Further, $f_0(x)+ur(x)=f_0(x)+ug_1(x)m(x)+ur(x)+u(q-1)g_1(x)m(x)$, it follows that $\mathscr{C}_1\subseteq \mathscr{C}$. Thus $\mathscr{C}=\mathscr{C}_1$.
\end{proof}
\begin{lemma}
Let $\mathscr{C}=(f_0(x)+uf_1(x), ug_1(x))$ be a cyclic code of length $n$ over $R$. If $f_0(x)=g_1(x)$, then $\mathscr{C}=(f_0(x)+uf_1(x))$. Further, if $g_1(x)$ is monic over $R$, then $(f_0(x)+uf_1(x))|(x^n-1)$.
\end{lemma}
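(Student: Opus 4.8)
The plan is to treat the two assertions separately, the first being an immediate consequence of $u^2=0$ and the second requiring a division argument. For the first assertion, suppose $f_0(x)=g_1(x)$. I would simply multiply the generator $f_0(x)+uf_1(x)$ by $u$ to get
$$u\big(f_0(x)+uf_1(x)\big)=uf_0(x)=ug_1(x),$$
since $u^2=0$. Hence $ug_1(x)\in\big(f_0(x)+uf_1(x)\big)$, so the second generator is redundant and $\mathscr{C}=\big(f_0(x)+uf_1(x)\big)$.

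For the second assertion, the first step is to observe that when $g_1(x)$ is monic we may, by Lemma~\ref{lemma 5}, assume $\deg f_1(x)<\deg g_1(x)=\deg f_0(x)=:d$; consequently the generator $\gamma(x):=f_0(x)+uf_1(x)$ is itself monic of degree $d$ in $R[x]$, its leading coefficient coming from $f_0(x)$. Because $\gamma(x)$ is monic, the division algorithm is available over the commutative ring $R$, so I would write
$$x^n-1=\gamma(x)Q(x)+r(x),\qquad r(x)=0\ \text{or}\ \deg r(x)<d .$$
Reducing modulo $x^n-1$ shows $r(x)\equiv-\gamma(x)Q(x)$, so $r(x)$ lies in the ideal $\mathscr{C}=(\gamma(x))$ of $R[x]/(x^n-1)$, and since $\deg r(x)<d\le n$ it is a genuine codeword. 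The goal is then to prove $r(x)=0$, which immediately yields $\gamma(x)\mid(x^n-1)$.

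To kill the remainder I would split $r(x)=r_0(x)+ur_1(x)$ with $r_0,r_1\in\mathbb{Z}_q[x]$ of degree $<d$ and invoke the two ideals produced in the proof of Theorem~\ref{theorem: 3.2}. Applying the projection $\psi$ gives $r_0(x)=\psi(r(x))\in\psi(\mathscr{C})=(f_0(x))$; the minimal-degree nonzero element of this cyclic code over $\mathbb{Z}_q$ has degree $d$, so $\deg r_0(x)<d$ forces $r_0(x)=0$. Then $r(x)=ur_1(x)\in\ker\psi=(ug_1(x))$, whence $r_1(x)\in(g_1(x))$, and the same minimal-degree argument for the monic $g_1(x)$ of degree $d$ gives $r_1(x)=0$. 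Therefore $r(x)=0$ and $\gamma(x)\mid(x^n-1)$.

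The step I expect to be the main obstacle is precisely the claim that the cyclic codes $(f_0(x))$ and $(g_1(x))$ over $\mathbb{Z}_q$ contain no nonzero polynomial of degree below $d$. This is the standard property of the classical generator polynomial, but it genuinely needs $f_0(x)=g_1(x)$ to be a \emph{monic divisor of} $x^n-1$ (equivalently, that these codes are free): a merely monic generator of an ideal of $\mathbb{Z}_q[x]/(x^n-1)$ may well have strictly smaller-degree elements. I would therefore read ``$g_1(x)$ monic'' as ``$g_1(x)$ a monic generator polynomial dividing $x^n-1$,'' write $x^n-1=g_1(x)h(x)$ over $\mathbb{Z}_q$, and use that $\{g_1,xg_1,\dots,x^{n-d-1}g_1\}$ is then a $\mathbb{Z}_q$-basis of the code whose canonical representatives have minimal degree $d$. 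This is the point at which I would be most careful.
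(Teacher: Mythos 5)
Your argument follows the same route as the paper's: the first assertion is handled exactly as in the paper, by multiplying the generator by $u$ and using $u^2=0$ to see $u(f_0(x)+uf_1(x))=uf_0(x)=ug_1(x)$, and the second assertion likewise invokes Lemma \ref{lemma 5} to make $f_0(x)+uf_1(x)$ monic and then divides $x^n-1$ by it, producing a remainder that lies in $\mathscr{C}$ and has degree less than ${\rm deg}(f_0(x))$. Where you genuinely add something is at the last step: the paper simply asserts ``since $b(x)\in\mathscr{C}$, it follows that $b(x)=0$'' with no justification, whereas you decompose the remainder as $r_0(x)+ur_1(x)$, push it through $\psi$ and ${\rm Ker}\psi$, and reduce everything to the claim that the $\mathbb{Z}_q$-cyclic codes $(f_0(x))$ and $(g_1(x))$ contain no nonzero element of degree below ${\rm deg}(f_0(x))$.

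The caveat you raise about that claim is well taken, and it exposes a real gap in the lemma as stated: over $\mathbb{Z}_q$ with $q=p^s$, $s>1$, monicity of $f_0(x)=g_1(x)$ alone does not give the minimal-degree property. Concretely, take $q=4$, $n=3$, $f_0(x)=g_1(x)=x+1$, $f_1(x)=0$. Then $2=(x+1)(x^2-x+1)-(x^3-1)$ is a nonzero codeword of degree $0$ in $\mathscr{C}=(x+1)$, and $x+1\nmid x^3-1$ over $\mathbb{Z}_4$ (evaluating at $x=-1$ gives $-2\neq 0$), so the conclusion of the lemma fails. Hence the division-with-remainder step cannot be closed without your additional reading that $g_1(x)$ is a monic \emph{divisor} of $x^n-1$ (equivalently that $(f_0(x))$ is a free $\mathbb{Z}_q$-code with $\{g_1,xg_1,\ldots,x^{n-d-1}g_1\}$ as a basis); under that hypothesis your completion of the argument is correct and in fact repairs the paper's proof.
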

\begin{proof}
Clearly, $(f_0(x)+uf_1(x))\subseteq \mathscr{C}$. Further, since $u(f_0(x)+uf_1(x))=uf_0(x)=ug_1(x)$, it follows that $\mathscr{C}\subseteq (f_0(x)+uf_1(x))$. Thus $\mathscr{C}=(f_0(x)+uf_1(x))$. Since $f_0(x)=g_1(x)$ and $g_1$ is monic, then $f_0(x)+uf_1(x)$ is also monic over $R$ by Lemma \ref{lemma 5}. Therefore there are polynomials $a(x), b(x)\in R[x]$ such that
\begin{equation*}
x^n-1=a(x)(f_0(x)+uf_1(x))+b(x)
\end{equation*}
with $b(x)=0$ or ${\rm deg}(b(x))< {\rm deg}(f_0(x))$. Since $b(x)\in \mathscr{C}$, it follows that $b(x)=0$. Thus $(f_0(x)+uf_1(x))|(x^n-1)$ over $R$.
\end{proof}
\begin{theorem}
Let $\mathscr{C}=(f_0(x)+uf_1(x), ug_1(x))$ be a cyclic code of length $n$ over $R$ and $f_0(x), g_1(x)$ be monic over $\mathbb{Z}_q$. Let ${\rm deg}(f_0(x))=k_0$ and ${\rm deg}(g_1(x))=k_1$. Then the set
\begin{equation*}
\beta=\{(f_0(x)+uf_1(x)), \ldots, x^{n-k_0-1}(f_0(x)+uf_1(x)), ug_1(x), \ldots, x^{k_0-k_1-1}ug_1(x)\}
\end{equation*}
forms the minimum generating set of $\mathscr{C}$, and $|\mathscr{C}|=q^{2n-k_0-k_1}$.
\end{theorem}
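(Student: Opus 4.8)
The plan is to regard $\mathscr{C}$ as a module over the local ring $R$ and to prove, in order, that $\beta$ spans $\mathscr{C}$, that the spanning representation is unique (which pins down $|\mathscr{C}|$), and finally that $|\beta|=n-k_1$ is the minimum possible number of generators. First I would record the structural facts the hypotheses provide. By Lemma \ref{lemma 5} we may assume $\deg f_1<k_1$, and by Theorem \ref{theorem: 3.2} we have $g_1\mid f_0$. Moreover, since $f_0$ and $g_1$ are \emph{monic} generators of the residue code $\psi(\mathscr{C})=(f_0)$ and of the ideal $J=(g_1)$ inside $\mathbb{Z}_q[x]/(x^n-1)$, and since $\gcd(n,q)=1$, the existence of a monic generator forces each of these cyclic codes over $\mathbb{Z}_q$ to be free with $g_1\mid f_0\mid x^n-1$. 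I will use two consequences of freeness: the only element of $(f_0)$ of degree $<k_0$ is $0$, and every element of $(g_1)$ of degree $<k_0$ is a $\mathbb{Z}_q$-combination of $g_1,xg_1,\ldots,x^{k_0-k_1-1}g_1$.

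For spanning, take $c(x)\in\mathscr{C}$, represented as a polynomial of degree $<n$. Because $f_0+uf_1$ is monic of degree $k_0$, the shifts $x^i(f_0+uf_1)$ for $0\le i\le n-k_0-1$ form a staircase whose leading terms are $x^{k_0},\ldots,x^{n-1}$ with leading coefficient $1$; subtracting suitable $R$-multiples therefore cancels every term of $c$ of degree $\ge k_0$, leaving a remainder $d\in\mathscr{C}$ with $\deg d<k_0$. Applying the projection $\psi\colon R[x]\to\mathbb{Z}_q[x]$ gives $\psi(d)\in(f_0)$ of degree $<k_0$, whence $\psi(d)=0$, so $d=u\,e(x)$ with $e\in\mathbb{Z}_q[x]$, $\deg e<k_0$. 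Since $d\in\mathscr{C}$ and $\psi(d)=0$, the definition of $J$ forces $e\in J=(g_1)$, and by the second consequence above $d=ue$ is an $R$-combination of $ug_1,\ldots,x^{k_0-k_1-1}ug_1$. Hence $c\in\mathrm{span}_R(\beta)$.

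Next I would prove uniqueness, i.e.\ that the $R$-linear map sending $\big((r_i)_{i=0}^{n-k_0-1},(t_j)_{j=0}^{k_0-k_1-1}\big)\in R^{\,n-k_0}\times\mathbb{Z}_q^{\,k_0-k_1}$ to $\sum_i r_i x^i(f_0+uf_1)+\sum_j t_j x^j ug_1$ is injective. If this image vanishes, put $h=\sum_i r_i x^i$; then $h(f_0+uf_1)$ has degree $k_0+\deg h$ with top coefficient equal to the leading coefficient of $h$, while $u(\sum_j t_j x^j)g_1$ has degree $<k_0$. Comparing coefficients in degree $\ge k_0$ forces $h=0$, so all $r_i=0$; the leftover relation $u(\sum_j t_j x^j)g_1=0$ then forces all $t_j=0$, since $g_1$ is monic and $t_j u=0\Leftrightarrow t_j=0$ for $t_j\in\mathbb{Z}_q$. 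Injectivity together with spanning yields $|\mathscr{C}|=|R|^{\,n-k_0}\,|\mathbb{Z}_q|^{\,k_0-k_1}=(q^2)^{\,n-k_0}q^{\,k_0-k_1}=q^{\,2n-k_0-k_1}$.

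Finally, for minimality, the bijection above exhibits $\mathscr{C}\cong R^{\,n-k_0}\oplus(R/(u))^{\,k_0-k_1}$ as $R$-modules, where the second summand arises because $\mathrm{Ann}_R(x^j ug_1)=(u)$ (scalars act through $\psi$), so $R\cdot x^j ug_1\cong R/(u)$. Since $R$ is local with maximal ideal $(p,u)$ and residue field $\mathbb{F}_p$, the minimum number of $R$-generators of $\mathscr{C}$ equals $\dim_{\mathbb{F}_p}\mathscr{C}/(p,u)\mathscr{C}$ by Nakayama's lemma, and this dimension is $(n-k_0)+(k_0-k_1)=n-k_1=|\beta|$, so $\beta$ is a minimum generating set. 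I expect the spanning step to be the main obstacle: the delicate point is justifying that the low-degree remainder $d$ must have the form $ue$ with $g_1\mid e$, which is exactly where the freeness of the residue codes $(f_0)$ and $(g_1)$ — a consequence of the monic hypothesis combined with $\gcd(n,q)=1$ — is indispensable. A secondary point requiring care is that the leading-coefficient cancellation in the uniqueness step remains valid over the ring $R$ despite its zero divisors, which holds only because the relevant top coefficient is the unit $1$.
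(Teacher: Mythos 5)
Your overall architecture (division with remainder for spanning, a joint leading-coefficient argument for uniqueness, and Nakayama over the local ring for minimality) is a genuinely different and considerably more complete route than the paper's, which only checks that each of the two families is separately linearly independent and leaves the spanning of $\mathscr{C}$ by $\beta$ essentially unargued. Your uniqueness step, the cardinality count, and the identification of the minimal number of generators with $\dim_{\mathbb{F}_p}\mathscr{C}/(p,u)\mathscr{C}=n-k_1$ are all correct as written.

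However, there is one genuine gap, and it sits exactly where you predicted the main obstacle would be. Your assertion that monicity of $f_0$ and $g_1$ together with $\gcd(n,q)=1$ forces the residue codes $(f_0)$ and $(g_1)$ over $\mathbb{Z}_q$ to be free with $g_1\mid f_0\mid x^n-1$ is false. Take $q=4$, $n=3$, $f_0=x+1$: under the Chinese Remainder decomposition $\mathbb{Z}_4[x]/(x^3-1)\cong\mathbb{Z}_4\times{\rm GR}(4,2)$ the monic polynomial $x+1$ maps to $(2,\text{unit})$, so $(x+1)=(2)\times{\rm GR}(4,2)$, which is not free and contains the nonzero constant $2$ — an element of $(f_0)$ of degree $0<k_0$. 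Thus the key fact you rely on for spanning, that the remainder $d$ of degree $<k_0$ must satisfy $\psi(d)=0$, does not follow from your hypotheses, and the spanning argument collapses. Indeed the theorem's cardinality formula itself fails for $\mathscr{C}=(x+1)$ over $\mathbb{Z}_4+u\mathbb{Z}_4$ (one gets $4^5$, not $4^4$), so the divisibility $g_1\mid f_0\mid x^n-1$ is an unstated hypothesis that must be \emph{assumed}, not derived; the paper's own proof quietly imports it through the phrase ``$f_0(x)$ is the generator polynomial of some cyclic code of length $n$ over $\mathbb{Z}_q$'' (used there to make the constant term of $f_0$ a unit). If you add that divisibility as a standing assumption, every remaining step of your argument goes through.
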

\begin{proof}
Let $\gamma=\{(f_0(x)+uf_1(x)), \ldots, x^{n-k_0-1}(f_0(x)+uf_1(x)), ug_1(x), \ldots, x^{n-k_1-1}ug_1(x)\}$. Then $\gamma$ spans the cyclic code $\mathscr{C}$. Further, it is sufficient to show that $\beta$ spans $\gamma$, which follows that $\beta$ also spans $\mathscr{C}$. Now we only need to show that $\beta$ is linearly independent. For simplicity, we denote $f(x)$ as $f_0(x)+uf_1(x)$. Since $f_0$ is monic, then the constant coefficient of $f(x)$ is a unit of $R$ by the fact that $f_0(x)$ is the generator polynomial of some cyclic code of length $n$ over $\mathbb{Z}_q$. Let
\begin{equation} \label{Eq.1}
a_0f(x)+a_1xf(x)+\cdots+a_{n-k_0-1}x^{n-k_0-1}f(x)=0.
\end{equation}
Let $F_0$ be the constatant coefficient of $f(x)$. Then $a_0F_0=0$, which implies that $a_0=0$. Therefore the Eq. (\ref{Eq.1}) becomes $a_1f(x)+a_2xf(x)+\cdots+a_{n-k_0-1}x^{n-k_0-2}f(x)=0$. Similarly, $a_1=a_2=\cdots=a_{n-k_0-1}=0$. Thus $(f_0(x)+uf_1(x)), x(f_0(x)+uf_1(x)), \ldots, x^{n-k_0-1}(f_0(x)+uf_1(x))$ are $R$-linear independent. One can also prove that $ug_1(x), uxg_1(x), \ldots, ux^{n-k_1-1}g_1(x)$ are $\mathbb{Z}_q$-linear independent. Thus $|\mathscr{C}|=q^{2n-k_0-k_1}$.
\end{proof}
We now consider the cyclic code $\mathscr{C}$ as a principal ideal of $R[x]/(x^n-1)$.
\begin{theorem}
Let $\mathscr{C}$ be a principal generated cyclic code of length $n$ over $R$. Then $\mathscr{C}$ is free if and only if there is a monic polynomial $g(x)$ such that $\mathscr{C}=(g(x))$ and $g(x)|(x^n-1)$. Moreover, the set
\begin{equation*}
\{g(x), xg(x), \ldots, x^{n-{\rm deg}(g(x))-1}g(x)\}
\end{equation*}
forms the minimum generating set of $\mathscr{C}$ and $|\mathscr{C}|=q^{2(n-{\rm deg}(g(x)))}$.
\end{theorem}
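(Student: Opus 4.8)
The plan is to prove the two implications separately, and to read off the assertions about the minimum generating set and the cardinality directly from the ``if'' direction.

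For the ``if'' direction, suppose $g(x)$ is monic with $g(x)\mid(x^n-1)$ over $R$ and $\mathscr{C}=(g(x))$; put $d=\deg(g(x))$ and write $x^n-1=g(x)h(x)$ with $h(x)$ monic of degree $n-d$. First I would show that $B=\{g(x),xg(x),\ldots,x^{n-d-1}g(x)\}$ spans $\mathscr{C}$: any element of $\mathscr{C}$ is $a(x)g(x)$ modulo $x^n-1$, and dividing $a(x)$ by the monic polynomial $h(x)$ as $a(x)=c(x)h(x)+r(x)$ with $\deg r<n-d$ gives $a(x)g(x)\equiv r(x)g(x)\pmod{x^n-1}$, which is an $R$-combination of $B$. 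Linear independence of $B$ is a degree argument: a nontrivial relation $\sum_i a_i x^i g(x)\equiv 0\pmod{x^n-1}$ has degree strictly less than $n$, hence must vanish in $R[x]$; since $g(x)$ is monic, the top coefficient of $\left(\sum_i a_i x^i\right)g(x)$ equals the highest nonzero coefficient of $\sum_i a_i x^i$, forcing every $a_i=0$. Thus $B$ is a free $R$-basis, so $\mathscr{C}$ is free of rank $n-d$, $B$ is a minimum generating set, and $|\mathscr{C}|=|R|^{\,n-d}=q^{2(n-d)}$.

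For the ``only if'' direction, suppose $\mathscr{C}$ is principal and free. By the earlier theorem on generators I may write $\mathscr{C}=(f_0(x)+uf_1(x),\,ug_1(x))$ with $f_0,g_1$ monic over $\mathbb{Z}_q$, $g_1\mid f_0$, and $k_0=\deg f_0$, $k_1=\deg g_1$. The preceding theorem gives $|\mathscr{C}|=q^{2n-k_0-k_1}$ and exhibits a minimum generating set of cardinality $n-k_1$, so the minimal number of $R$-generators is $\mu(\mathscr{C})=n-k_1$. Here I would invoke the freeness criterion for a finite local ring: since $R$ is local with $|R|=q^2$ and residue field $\mathbb{F}_p$, a finitely generated $R$-module $M$ is free if and only if $|M|=|R|^{\mu(M)}$, because the surjection $R^{\mu(M)}\twoheadrightarrow M$ furnished by Nakayama's lemma is an isomorphism exactly when source and target have equal cardinality. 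Applying this to $\mathscr{C}$ forces $q^{2n-k_0-k_1}=q^{2(n-k_1)}$, hence $k_0=k_1$.

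Finally, $g_1\mid f_0$ with both monic of the same degree yields $f_0=g_1$. Then the lemma treating the case $f_0(x)=g_1(x)$ shows that $\mathscr{C}=(f_0(x)+uf_1(x))$ and, since $g_1$ is monic, that $g(x):=f_0(x)+uf_1(x)$ is monic and divides $x^n-1$ over $R$; this is the desired monic generator, and the statements about the minimum generating set and $|\mathscr{C}|$ follow from the ``if'' direction with $d=k_0$. I expect the main obstacle to be the ``only if'' direction: one must be careful that the minimum generating set produced by the earlier theorem genuinely computes $\mu(\mathscr{C})$ as an invariant of the $R$-module, and then apply the local-ring freeness criterion cleanly. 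Once that is in place, the cardinality bookkeeping ($2n-k_0-k_1$ versus $2(n-k_1)$) is routine and the implication $k_0=k_1\Rightarrow f_0=g_1$ is immediate.
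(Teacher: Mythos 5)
Your ``if'' direction is sound and essentially the paper's argument, with two small improvements: you actually prove spanning (by dividing $a(x)$ by the monic cofactor $h(x)$), and your independence argument via leading coefficients is cleaner than the paper's, which peels off constant coefficients and needs $g_0$ to be a unit. The ``only if'' direction is where you depart from the paper --- the paper argues directly that a free basis of the form $\{g(x),\ldots,x^{n-\deg g -1}g(x)\}$ forces $a(x)g(x)=x^n-1$ --- and it is also where your proof has a genuine gap. You write that by the earlier theorem on generators you ``may write $\mathscr{C}=(f_0(x)+uf_1(x),\,ug_1(x))$ with $f_0,g_1$ monic over $\mathbb{Z}_q$.'' But Theorem 4 does not assert monicity, and monicity is not a harmless normalization here: for $q=p^s$ with $s\ge 2$, an ideal of $\mathbb{Z}_q[x]/(x^n-1)$ such as $(p)$ contains no monic polynomial at all, so $\psi(\mathscr{C})=(f_0)$ and $J=(g_1)$ need not admit monic generators. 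Since Theorem 5's cardinality formula $|\mathscr{C}|=q^{2n-k_0-k_1}$ and Lemma 6's divisibility conclusion are both stated only under the monicity hypothesis, every subsequent step of your count ($\mu(\mathscr{C})=n-k_1$, then $k_0=k_1$, then $f_0=g_1$, then $g\mid x^n-1$) rests on an assumption you have not justified.

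The gap is fillable, and your strategy (Nakayama plus the criterion that a finite module $M$ over the finite local ring $R$ is free iff $|M|=|R|^{\mu(M)}$, which is correct) is a legitimate and arguably more informative route than the paper's, since it explains where the monic generator comes from rather than presupposing one. To close it you must first show that freeness of $\mathscr{C}$ forces $\psi(\mathscr{C})$ and $J$ to be generated by monic divisors of $x^n-1$. The natural way is via the CRT decomposition $\mathscr{C}=\bigoplus_l \mathscr{C}_l$ with $\mathscr{C}_l$ an ideal of $R[x]/(f_l(x))$: a direct summand of a free module over the local ring $R$ is free, and inspection of the ideal list in Theorem 2 shows that every proper nonzero ideal of $R[x]/(f_l(x))$ has nonzero annihilator (it is killed by $p^{s-i}$, by $u$, or by $p^{s-j}u$), hence is not $R$-free; so $\mathscr{C}=\bigl(\prod_{l\in S}f_l(x)\bigr)$ for some subset $S$, which in fact yields the ``only if'' direction outright and makes the detour through Theorems 4 and 5 unnecessary. (For fairness: the paper's own forward direction also tacitly assumes the principal generator is monic and that the displayed set is a basis of the stated size, so it shares this weakness.)
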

\begin{proof}
Suppose that $\mathscr{C}=(g(x))$ is $R$-free. Then the set $\{g(x), xg(x), \ldots, x^{n-{\rm deg}(g(x))-1}g(x)\}$ form the $R$-basis of $\mathscr{C}$. Since $x^{n-{\rm deg}g(x)}\in \mathscr{C}$, it follows that $x^{n-{\rm deg}(g(x))}$ can be written as a linear combination of the elements $g(x), xg(x), \ldots, x^{n-{\rm deg}(g(x))-1}g(x)$, i.e. $x^{n-{\rm deg}(g(x))}g(x)+\sum_{i=0}^{n-{\rm deg}(g(x))-1}a_ix^ig(x)=0$. Let $a(x)=\sum_{i=0}^{n-{\rm deg}(g(x))-1}a_ix^i+x^{n-{\rm deg}(g(x))}$. Then $a(x)g(x)=0$ in $R[x]/(x^n-1)$, which implies that $(x^n-1)|g(x)a(x)$. Since $g(x)a(x)$ is monic and ${\rm deg}(g(x)a(x))=n$, it follows that $x^n-1=g(x)a(x)$ implying $g(x)|(x^n-1)$.
\par
On the other hand, if $\mathscr{C}=(g(x))$ with $g(x)|(x^n-1)$. Then $g(x), xg(x), \ldots, x^{n-{\rm deg}(g(x))-1}g(x)$ span $\mathscr{C}$. In the following, we prove that $g(x), xg(x), \ldots, x^{n-{\rm deg}(g(x))-1}g(x)$ are $R$-linear independent. Let $g(x)=g_0+g_1x+\cdots+x^{{\rm deg}(g(x))}$ and
\begin{equation}\label{Eq.2}
a_0g(x)+a_1xg(x)+\cdots+a_{n-{\rm deg}(g(x))-1}x^{n-{\rm deg}(g(x))-1}g(x)=0.
\end{equation}
Since $g_0$ is a unit of $R$ and $a_0g_0=0$, it follows that $a_0=0$. Then the Eq. (\ref{Eq.2}) becomes
\begin{equation*}
a_1g(x)+a_2xg(x)+\cdots+a_{n-{\rm deg}(g(x))-1}x^{n-{\rm deg}(g(x))-2}g(x)=0.
\end{equation*}
Similarly, we have $a_1=a_2=\cdots=a_{n-{\rm deg}(g(x))-1}=0$. Thus $g(x), xg(x), \ldots, x^{n-{\rm deg}(g(x))-1}g(x)$ are $R$-linear independent, i.e. $\mathscr{C}$ is free over $R$.
\end{proof}
Similar to the cyclic codes over finite fields, we can also give the following BCH-type bound on the minimum Hamming distance of free cyclic codes over $R$. The proof precess is similar to that of the case over finite fields. Here, we omit it.
\begin{theorem}{(BCH-type Bound)}
Let $\mathscr{C}=(g(x))$ be a free cyclic code of length $n$ over $R$. Suppose that $g(x)$ has roots $\xi^b, \xi^{b+1}, \ldots, \xi^{b+\delta-2}$, where $\xi$ is a basic primitive $n$th root of unity in some Galois extension ring of $R$. Then the minimum Hamming distance of $\mathscr{C}$ is $d_H(\mathscr{C})\geq \delta$.
\end{theorem}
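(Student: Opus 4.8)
The plan is to adapt the classical BCH argument for cyclic codes over finite fields, replacing the division steps (which are not available over a ring) by the statement that the relevant Vandermonde determinant is a \emph{unit} in the Galois extension ring containing $\xi$. Throughout, let $\mathrm{GR}(R,m)$ denote a Galois extension of $R$ in which $\xi$ lives.

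First I would argue by contradiction: suppose $\mathscr{C}$ contains a nonzero codeword $c(x)$ of Hamming weight $w\le\delta-1$, and write $c(x)=\sum_{k=1}^{w} c_{i_k}x^{i_k}$ with $0\le i_1<\cdots<i_w\le n-1$ and each $c_{i_k}\in R\setminus\{0\}$. Since $\mathscr{C}=(g(x))$ and evaluation at an $n$th root of unity is well defined on $R[x]/(x^n-1)$, the hypothesis $g(\xi^{b+\ell})=0$ for $\ell=0,\ldots,\delta-2$ yields $c(\xi^{b+\ell})=0$ for those $\ell$. Keeping the first $w$ of these identities and setting $\eta_k=\xi^{i_k}$ and $d_k=c_{i_k}\eta_k^{\,b}$, I obtain the homogeneous system $\sum_{k=1}^{w} d_k\,\eta_k^{\,\ell}=0$ for $\ell=0,\ldots,w-1$, whose coefficient matrix is the $w\times w$ Vandermonde matrix $V=(\eta_k^{\,\ell})$ with $\det V=\prod_{1\le k<k'\le w}(\eta_{k'}-\eta_k)$, all taken in $\mathrm{GR}(R,m)$.

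The step I expect to be the main obstacle---and the only place where the ring structure really matters---is to show that $\det V$ is a unit. Each factor is $\eta_{k'}-\eta_k=\xi^{i_k}\bigl(\xi^{\,i_{k'}-i_k}-1\bigr)$; since $\xi$ is a unit, it suffices to prove that $\xi^{j}-1$ is a unit for every $0<j<n$. Here I would invoke the unit criterion of Lemma 2, according to which an element of $\mathrm{GR}(R,m)$ is a unit exactly when its reduction in the residue field $\mathbb{F}_{p^m}$ is nonzero. Because $\gcd(n,q)=1$, the image $\overline{\xi}$ is a primitive $n$th root of unity in $\mathbb{F}_{p^m}$, so $\overline{\xi}^{\,j}\neq 1$, hence $\overline{\xi}^{\,j}-1\neq 0$, for $0<j<n$; therefore $\xi^{j}-1$ is a unit. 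It follows that every factor of $\det V$ is a unit, and hence so is $\det V$.

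Finally, since $\det V$ is a unit, $V$ is invertible over $\mathrm{GR}(R,m)$ and the system forces $d_1=\cdots=d_w=0$. As every $\eta_k^{\,b}$ is a unit, this gives $c_{i_k}=0$ for all $k$, contradicting the choice of $c(x)$. Hence no nonzero codeword of $\mathscr{C}$ has weight at most $\delta-1$, and $d_H(\mathscr{C})\ge\delta$. The essential ring-theoretic ingredients are exactly Lemma 2 and the standing hypothesis $\gcd(n,q)=1$; the rest is the routine Vandermonde computation, which is why the details may legitimately be omitted.
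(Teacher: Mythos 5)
Your proof is correct and follows exactly the approach the paper intends: the paper omits its own proof, saying only that it is ``similar to that of the case over finite fields,'' and your write-up supplies precisely the one ring-theoretic point that needs checking, namely that each factor $\xi^{j}-1$ ($0<j<n$) of the Vandermonde determinant is a unit by the unit criterion of Lemma 2 together with $\gcd(n,q)=1$. No gaps.
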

\begin{example}
Let $R=\mathbb{Z}_8+u\mathbb{Z}_8$. Suppose that $f(x)=x^4+4x^3+6x^2+3x+1$, then $f(x)$ is a basic primitive polynomial with degree $4$ over $R$. Let $\xi=x+(f(x))$. Then $\xi$ is a basic primitive element of $\mathcal {R}=R[x]/(f(x))$. Consider a cyclic code $\mathscr{C}$ of length $15$ and generated by the polynomial $g(x)=x^{10}+6x^9+x^8+6x^7+3x^5+7x^4+4x^3+7x^2+5x+1$. Then $g(x)|(x^{15}-1)$, which implies that $\mathscr{C}$ is a free cyclic code and $|\mathscr{C}|=64^5$. Furthermore, $g(x)$ has $\xi, \xi^2, \xi^3, \xi^4, \xi^5, \xi^6$ as its part roots in $\mathcal {R}[x]$. Then, by Theorem 7, we have that $d_H(\mathscr{C})\geq 7$. Since $d_H(4g(x))=7$, it follows that $d_H(\mathscr{C})=7$, i.e. $\mathscr{C}$ is a $(15, 64^5, 7)$ cyclic code over $R$.
\end{example}

\begin{example}
Let $\mathscr{C}=(1+2x+x^2+3x^3, ux-u)$ be a cyclic code of length $7$ over $\mathbb{Z}_4+u\mathbb{Z}_4$. Then, by Theorem 6, we have that $|\mathscr{C}|=4^{10}$. For any $a+bu\in \mathbb{Z}_4+u\mathbb{Z}_4$, define $\varphi(a+bu)=(b, a+b)$ and Lee weight of $a+bu$ to be the Lee weight of $(b, a+b)$. Then $\varphi$ is a preserving Lee distance $\mathbb{Z}_4$-linear map from $\mathbb{Z}_4+u\mathbb{Z}_4$ to $\mathbb{Z}_4^2$ (see \cite{Yildiz}). By the help of Magma computational software, we have that $\varphi(\mathscr{C})$ is a $(14, 4^{10}, 4)$ linear code over $\mathbb{Z}_4$, which binary image under the Gray map from $\mathbb{Z}_4$ to $\mathbb{F}_2^2$ is a $(28, 2^{20}, 4)$ binary code having the same parameters as an optimal binary linear code $[28, 20, 4]$.
\end{example}
\section*{Acknowledgments}
This research is supported by the National Key Basic Research Program of China (Grant No. 2013CB834204), and the National Natural Science Foundation of China (Grant Nos. 61171082 and  61301137).

\end{document}